\def\th@plain{%
  \thm@notefont{}
  \itshape 
}
\def\th@definition{%
  \thm@notefont{}
  \normalfont 
}
\newtheorem{theorem}{Theorem}[section]
\newtheorem{proposition}[theorem]{Proposition}
\newtheorem{assumption}{Assumption}
\newtheorem*{definition*}{Definition}
\newtheorem*{example*}{Example}
\newcommand{\N}{\mathbb{N}}
\newcommand{\R}{\mathbb{R}}
\renewcommand{\P}{\mathbb{P}}
\def\blfootnote{\gdef\@thefnmark{}\@footnotetext}
\title{
    Auctioning Time to Mitigate Latency Races: \\ Theory and Evidence from Blockchains
    \blfootnote{The authors are grateful to Edward Felten and Akaki Mamageishvili for helpful comments and discussion, and acknowledge the support of Ripple's University Blockchain Research Initiative.}
    }
\author{
    Agostino Capponi\footnote{Columbia University, Department of Industrial Engineering and Operations Research and Columbia Business School; ac3827@columbia.edu.} \ 
    and 
    Brian Zhu\footnote{Columbia University, Department of Industrial Engineering and Operations Research; bzz2101@columbia.edu.}
    }
\date{}
\begin{document}

\maketitle
\thispagestyle{empty}

\begin{abstract}
High-frequency trading, in both traditional and decentralized markets, induces latency races and redundant order flow as traders spend resources to win time-sensitive opportunities. 
We show that auctioning artificial time priority can redirect resources away from wasteful speed races toward auction payments. While such waste is difficult to measure in traditional markets, blockchain transactions provide transparent records of these competitive costs through observable duplicate submissions. We study the introduction of Timeboost, a time-priority auction mechanism on Arbitrum, a blockchain that batches transactions before settlement on Ethereum, as a natural experiment. We find that redundant transactions decrease and platform revenue increases relative to comparable networks, consistent with our theoretical predictions.

\end{abstract}

\vfill

\clearpage
\pagenumbering{arabic}

\section{Introduction}

Speed is a central determinant of trading outcomes in high-frequency environments. When access to time-sensitive trading opportunities depends on being first to a platform, traders invest in costly technology, connectivity, and order management systems to reduce latency. These latency races create large expenditures with limited social upside as resources are focused toward speed acquisition and redundant order flow creates congestion externalities. Empirical evidence quantifying these races confirms their significance. \citet{AquilinaBudishONeill2022} find latency arbitrage races, where traders rapidly attempt to ``snipe'' profitable opportunities and cancel failed attempts, occur roughly once per minute per symbol in FTSE 100 stocks. However, the investment decisions underlying these races remain unobservable, because speed technology investment in traditional markets, such as co-location services and proprietary infrastructure, is private information and unable to be observed.

Many studies in the market-design literature have  proposed altering the allocation of time priority, for example through frequent batch auctions (\citet{BudishCramtonShim2015}), speed bumps such as on IEX (\citet{Woodward2018IEX}), or Pigouvian taxes on investment technologies (\citet{BiaisFoucaultMoinas2015}). An alternative  approach to mitigating latency races is to sell time priority explicitly.\footnote{While the IEX introduces a 350 millisecond delay for all non-resting orders, this is not a winner-take-all approach and is mainly designed to protect resting orders from being picked off during periods of high volatility.} Instead of rewarding whoever arrives first, a platform can auction an ordering advantage that determines execution precedence for a specified time interval. The rationale behind time-priority auctions is to convert competition over speed into competition over price: traders bid for priority rather than investing in latency reduction technologies. The idea is closely related to broader proposals in market design that replace technological races with price-based allocation of scarce trading privileges, including batch auctions and congestion pricing mechanisms. Whether this reallocation occurs in practice, however, is difficult to test in traditional markets because the underlying investments in speed technology are not directly observable. 

We study this question in blockchains, where the outcomes of latency races are more directly observable and most notably, \textit{costly}. Blockchain-based trading venues generate time-sensitive opportunities known as {maximal extractable value} (MEV), which include centralized and decentralized exchanges (\citet{CapponiJia2025DEX}), cyclic arbitrage between exchanges (\citet{optimisticMEV}), frontruns (\citet{CapponiJiaWang2025MEV}, Lehar and Parlour (2023)), backruns, and liquidations on decentralized lending protocols. 
Participants compete to be first to exploit profitable opportunities, but the nature of competition differs from traditional markets. Improvements to latency see diminished returns relative to traditional electronic markets due to how blockchain networks process transactions. As a result, traders often submit multiple copies of the same transaction in rapid succession, hoping one executes while the rest fail. This shifts competitive resources from infrastructure investment toward transaction fees and redundant submissions.\footnote{See Section 2 for details on how blockchain transaction routing creates this competitive dynamic.}

Unlike traditional electronic platforms where order submission and cancellation is costless, users incur infrastructural costs for failed attempts on blockchains. Moreover, while message data on traditional exchanges is proprietary and may require invoking financial market regulations to acquire, 
failed attempts are public and visible on blockchains. These duplicate submissions constitute an on-chain trail of evidence and provide a measurable analogue of otherwise unobservable costly efforts to achieve latency. We summarize these differences in Table \ref{tab:hft_grid}.

We leverage the introduction of \textit{Timeboost} on the Arbitrum blockchain, as a market-design intervention in this observable environment.\footnote{Layer-2 blockchains execute transactions off the main Ethereum network and periodically settle batched results on-chain, enabling faster and cheaper execution. Most Layer-2 networks use a centralized sequencer to order transactions, creating a setting where time-priority can be explicitly auctioned.} Timeboost auctions an artificial time priority via a sealed-bid second-price auction. Rather than racing through duplicate submissions, traders can compete over priority directly via auction. We interpret this setting as a natural laboratory that introduces an explicit market for time priority into an environment where latency races are actively occurring. The experiment allows us to study a general question: what happens when a platform replaces speed races with price-based allocation of priority?

\vspace{1em}

\begin{table}[!h]
\centering
\begin{tabular}{p{2.5 cm} p{6.5cm} p{6cm}}
\toprule
 & \textbf{Speed technology} & \textbf{Order submission} \\
\midrule
\textbf{Centralized exchanges} 
& Orders propagated via proprietary networks to matching engine $\to$ latency differences translate directly to priority $\to$ \ul{unobservable but costly competition}
& Traders cancel and resubmit orders via message data at no cost $\to$ \ul{observable but costless competition} \\

\midrule
\textbf{Layer-2 blockchains} 
& Orders propagated via public networks to sequencing node; latency advantages are noisy and bring diminished returns $\to$ \ul{low incentive for competition here}
& Traders submit multiple identical transactions; unsuccessful attempts revert and incur fees $\to$ \ul{observable and costly competition} \\

\bottomrule
\end{tabular}
\caption{Dimensions of Latency Competition Across Trading Environments}
\label{tab:hft_grid}
\end{table}

We develop a model in which traders expend costly effort, interpretable as latency investment in traditional markets or duplicate submissions in blockchain markets, to capture time-sensitive opportunities; these investments have convex costs. We show that introducing a time-priority auction reduces equilibrium effort, 
 reallocating resources from wasteful speed competition into platform revenue. The blockchain environment represents a special case where effort costs are approximately linear, as they are gas fees paid from the reversion of non-winning duplicate transactions. Additionally, the user's revert costs are awarded to the blockchain platform, whereas effort costs in general may go to other service providers.\footnote{Revert costs typically go to the blockchain's sequencer node or associated decentralized autonomous organization (DAO). These entities may then choose to burn all or part of said fees for tokenomics purposes.} Having user costs being accrued by the platform as revenue also holds for traditional HFT settings, as many exchanges either own or tightly control the co-location facilities used by HFT entities. Under this setting, we show that despite accruing fewer fees from fewer transactions submitted, the auction revenue more than makes up for it, increasing overall platform revenue.

How much time priority should be allocated, and is allocating time priority to a single trader the most effective way to mitigate latency races? We extend the model to account for auctions that award multiple time advantage slots of the same amount. Along the intensive margin, i.e.\ the size of the time advantage, we show that broadly, total effort decreases and platform revenue increases in the amount of time priority given to the winner(s). Along the extensive margin i.e.\ how many traders receive a time advantage, we show that when the total number of users in the environment is large, increasing the number of winners may further reduce total effort and improve platform revenue.

We empirically test model predictions using transaction-level data across Layer-2 blockchains and an event-study design around Timeboost’s deployment on Arbitrum, measuring latency effort via bursts of identical transactions and comparing outcomes to control networks. We find significant evidence that following adoption, redundant submissions decline and platform revenue increases relative to benchmark networks. We additionally find that non-priority traders experience higher failure rates in contested opportunities. These findings match the model’s predictions.

Our results have implications for the improved design of traditional markets.  Blockchain markets make priority valuations observable through on-chain fees, with Timeboost introducing an explicit market for time priority. While institutional details differ between blockchain and traditional markets—execution speed, asset characteristics, and participant composition—the core economic question is common to both. Can price-based priority allocation substitute for costly speed competition? Our results suggest that explicit priority pricing can reallocate rents from technology investment to platform revenue without diminishing trading activity, providing  evidence relevant to broader market design debates about latency races in high-frequency trading environments.






Our findings contribute to the broader debate on the market design of exchanges accommodating high-frequency trading. First, our model predicts that auctioning time priority reduces latency investment effort, consistent with theoretical arguments for batch auctions, and providing another parallel mechanism to reduce latency arms races. Second, we show that platform revenue increases under the auction. This speaks to concerns that exchanges resist efficiency-improving design because it takes away co-location and speed-technology revenue (\citet{BudishLeeShim2020}). Our empirical analysis provides causal evidence supporting these predictions in a setting where latency competition is observable, and supports time-priority auctions as a parallel option alongside frequent batch auctions to mitigate latency races, with the added benefit that exchange platforms can capture this auction revenue.

\paragraph{Related Literature}

This paper contributes to three bodies of literature. First, we connect research on latency competition and market design in traditional HFT to decentralized trading environments through a common framework of costly time priority acquisition. A large market microstructure literature studies how HFT and latency competition shape trading strategies and market outcomes. \citet{HasbrouckSaar2013} propose measures of low-latency activity on NASDAQ and document how millisecond-scale response strategies affect spreads, depth, and short-term volatility. \citet{OHara2015HFMM} characterizes how technological change and HFT have transformed the organization of trading, emphasizing that speed and fragmentation alter liquidity provision and price discovery. 

Similar interactions have been documented in decentralized trading environments. \citet{DaianEtAl2019} document priority gas auctions on Ethereum-based decentralized exchanges, in which arbitrage bots bid up gas prices to obtain favorable transaction ordering, interpreting these races as a blockchain analogue of HFT latency competition. Subsequent work measures the scale and composition of MEV in DeFi: \citet{WeintraubEtAl2022} quantify MEV extraction in private relay pools, and \citet{TorresEtAl2024} analyze MEV across multiple networks and show that MEV volume and composition on Layer-2s can rival Ethereum’s. \citet{optimisticMEV} identify optimistic MEV as a major source of persistent spam-like probing on Layer-2s. Our paper connects these literatures by placing both traditional HFT and decentralized MEV extraction in a common model of costly time priority acquisition and by treating Layer-2 sequencers as matching engines whose design choices shape latency races.


Second, we contribute to market design literature in both traditional and blockchain financial markets by analyzing an explicit market for time priority and showing how it reallocates surplus from dissipation to revenue. In traditional settings, \citet{BudishCramtonShim2015} propose batch auctions to eliminate continuous-time races for queue priority, while \citet{BiaisFoucaultMoinas2015} study how slow markets and taxes on speed can implement socially optimal investment in trading technology. On blockchains, Flashbots introduced sealed-bid auctions over transaction bundles as a mechanism to internalize MEV and reduce observable priority gas auctions (\citet{Flashbots2020}). \citet{li2023mevhappy} and \citet{mclaughlin2025clvr} propose sequencing rules to reduce MEV based on greedy and volatility-minimizing heursitics, respectively. \citet{zhu2026revert} study how ``revert protection,'' i.e.\ a feature where sequencers simulate transactions and exclude failed ones from the final block, saving users from paying gas fees on failed transactions, can make priority fee auctions more competitive and increase sequencer revenue. \citet{MamageishviliEtAl2023} analyze a primitive version of the Timeboost mechanism that blends bidding and latency to allocate transaction priority. Our paper contributes to this mechanism-design literature by analyzing single-winner time-priority auctions, as implemented by Timeboost, and extend it to the multi-winner case. We show that under certain regimes, multi-winner time-priority auctions may result in better efficiency outcomes. 

Finally, we contribute to empirical work on decentralized exchange infrastructure by providing novel empirical evidence on how auctioning time priority affects order flow duplication and platform revenue. A growing DeFi microstructure literature studies automated market makers (AMMs), arbitrage, and liquidity provision \citet{Lehar2025, CapponiJia2025DEX}. Most closely related to our work is \citet{messias2025expresslane}, which provides a descriptive empirical analysis of Timeboost, documenting centralization of express-lane usage and high revert rates. By contrast, we use a difference-in-differences design to estimate the causal impact of Timeboost adoption on redundant orderflow and platform revenue.





\section{Institutional Details}

This section describes the institutional features that make blockchain-based trading venues a natural laboratory for studying time-priority auctions. We first contrast the infrastructure underlying traditional electronic markets with blockchain platforms, highlighting how differences in connectivity and network architecture determine the form of latency competition. We then explain the structure of Layer-2 blockchains, the types of profit opportunities that drive competition for transaction ordering, and the mechanics of the Timeboost mechanism that we study empirically.

\paragraph{Platform Infrastructure}

In modern electronic markets, trades occur via a centralized matching engine housed within a data center. Market participants connect directly to this engine using dedicated communication infrastructure, often by co-locating servers inside the exchange facility and routing orders through private cross-connects and fiber or microwave links. The exchange timestamps orders at when they enter its network interface, which determines priority. As connectivity is tightly engineered and shared routing uncertainty is minimized, latency differences translate directly into execution priority. As a result, competition centers on capital-intensive investment in communication technology designed to reduce physical transmission delay.

By contrast, transactions submitted to Layer-2 blockchains traverse heterogeneous paths before reaching the ordering intermediary, often called the \textit{sequencer}. Traders typically transmit through remote procedure call (RPC) gateways, load balancers, and peer-to-peer forwarding layers, all of which introduce random propagation and queueing delays outside the user and sequencer’s control. Although the sequencer ultimately processes transactions by timestamp, the arrival time it observes reflects both physical latency and network randomness. In this setting, marginal improvements in transmission may not translate directly into improved execution priority. Thus, as observed empirically (\cite{optimisticMEV},\cite{gogol2025firstspammed}), users adapt by submitting multiple attempts of the same transaction.

\paragraph{Layer-2 Blockchains}

Ethereum’s consensus and blockspace design prioritize decentralization and security, which constrains throughput and leads to congestion-induced fee spikes during active market periods. Layer-2 (L2) networks address this scalability bottleneck by batching and executing transactions off-chain before posting aggregated transaction data to Ethereum for settlement.\footnote{Despite having a centralized sequencer node, L2 networks achieve consensus by either using a dispute period.} This division of labor allows L2s to offer lower fees, higher transaction rates, and faster execution while preserving the settlement assurances of the base layer. Users submit signed transactions to a sequencer, which aggregates and executes them in batches before committing the resulting state to the underlying ``Layer-1'' blockchain. The sequencing step determines which transactions obtain time priority and therefore which traders capture time-sensitive opportunities—such as arbitrage between automated market makers (AMMs), liquidations in lending protocols, or oracle-based price updates.

Layer-2 blockchains implement their own transaction sequencing rules prior to settlement on Ethereum. Most use a \textit{priority gas auction (PGA)} or a \textit{first-come, first-served (FCFS)} sequencing rule. Under the PGA rule, transactions are ordered by a priority fee that users submit along with their transaction; ties in priority fee are commonly broken by the time at which the sequencer received the transaction. While top-of-block MEV (e.g.\ CEX-DEX arbitrage, liquidations) on PGA-using networks is contested over via first-price auctions, backruns and optimistic MEV does not require competition over priority fees, thus falling into the regime where ties are broken by timestamp. Under the FCFS rule, transactions are ordered directly by the time at which the sequencer received the transaction. 

Sequencing rules on Layer-2 networks determine the relative ordering of transactions and therefore which traders capture time-sensitive opportunities. Most L2s use either priority gas auctions (PGAs) in which users bid higher gas fees to obtain earlier queue positions, creating a fee-based form of time priority or a first-come, first-served (FCFS) rule in which the sequencer orders transactions according to observed arrival time via public or private RPC endpoints. Under both FCFS and PGA sequencing, execution outcomes depend on latency races and/or fee bidding. In PGA settings, when two transactions offer the same priority fee, sequencing falls back to arrival timestamp as a tie-breaking rule.

\paragraph{Maximal Extractable Value}

A significant share of economically meaningful trading activity on Layer-2 networks, also called {\it rollups}  because they 'roll up' many transactions into compressed batches posted to Ethereum, is associated with maximal extractable value (MEV):  rents that arise from the ability to influence transaction ordering relative to liquidity, oracle updates, or protocol state. Many common MEV opportunities consist of CEX-DEX arbitrage and DEX-DEX arbitrage. These opportunities are strongly time-sensitive: once a profitable mispricing is exploited, the payoff disappears for all subsequent transactions.

While MEV in the Ethereum mainnet is typically characterized by CEX-DEX arbitrage and sandwich attacks, the MEV landscape for L2s differs substantially: \cite{gogol2026sandwich} finds that sandwich attacks are essentially non-existent on L2. Instead L2s exhibit a distinctive pattern of \textit{optimistic MEV} and \textit{backruns} (\cite{optimisticMEV}, \cite{gogol2025firstspammed}), in which traders repeatedly submit transactions that would be profitable conditional on favorable sequencing or oracle updates, but that frequently revert when the required preconditions fail to materialize. Optimistic MEV generates large volumes of nearly-identical probing transactions whose result is binary: either capturing the opportunity or reverting at a cost. For the trader, reverts burn gas and sequencing fees, while for the venue they consume blockspace and network capacity, producing congestion externalities for other users. Because MEV payoffs are winner-take-all and highly sensitive to ordering, sequencing rules directly determine who captures MEV and who pays revert costs. Under FCFS or PGA sequencing, traders must race to obtain time priority through arrival latency or fee bidding, and in the absence of fine-grained latency tools, duplicate submissions serve as a costly substitute for speed. 


\paragraph{The Timeboost Mechanism}

Timeboost modifies Arbitrum’s default sequencing rule by introducing a sealed-bid, second-price auction for 200 milliseconds of transaction time priority. The mechanism operates at the beginning of each sequencing interval. Traders submit encrypted bids, denominated in Ether, to an off-chain autonomous auctioneer for access to the Express Lane, a short artificial time advantage that allows the winner’s transactions to be placed ahead of the first-come, first-served (FCFS) queue. The auction has a parameterizable reserve price, set to 0.001 ETH at the time of writing. The auction is resolved, with the winner submitted to Arbitrum's on-chain auction contract, where the highest bidder wins the express lane while paying the second-highest bid as an auction clearing price.\footnote{In the case where there is only one bidder, the price paid is zero.} Auction payments are first transferred from the winner's address to the auction contract, and then to a parameterizable beneficiary address, currently set to the Arbitrum DAO treasury at the time of writing.

Time priority lasts for 60 seconds per auction round. For the auction granting Express Lane access from time $t$ to $t+60$ seconds, bidding starts at $t-60$ seconds and closes at $t-15$ seconds, with the remaining 15 seconds used to propagate auction results on-chain. Any transactions submitted via the Express Lane (which only the winner can submit to) are sequenced according to the timestamp at which the sequencer receives the transaction, which we henceforth call the \textit{arrival timestamp}. All other transactions receive an artificial 200-millisecond delay to their arrival timestamp. Winners can also ``resell'' Express Lane access by allowing third parties to route transactions through their address for side payments on- or off-chain.

\section{Model and Results}

We start with a stylized, yet general model of a contest for a common-value time-sensitive trading opportunity of value $V>0$.  Transactions are ordered, e.g.\ by a matching engine or a sequencer, according to a first-come, first-served (FCFS) rule applied to the arrival timestamps. There are $N\ge 3$ traders indexed by $i\in\{1,\dots,N\}$. Exactly one trader can capture the opportunity, and all other traders fail to capture the opportunity. Trader $i$ can choose an effort level of $k_i\geq0$ into latency at a cost of $C(k_i)$, which determines their probability of successfully winning the opportunity relative to the other traders' investments. Specifically, we assume that latency investments result in a Tullock contest over the opportunity, with trader $i$'s probability of winning as proportionate to their investment relative to the total investment:
\begin{gather*}
    \P(\text{trader $i$ wins}\mid k_i,k_{-i}) = \frac{k_i}{k_i+\sum_{-i}k_{-i}}
\end{gather*}

\begin{assumption}
We additionally assume the following on the cost function $C:\R_+\to\R_+$:
\begin{itemize}
    \item $C$ is twice-differentiable, strictly increasing, and convex;
    \item $C$ has log-convex first derivative;
    \item if $C$ is not linear, then $C'(0)$ is sufficiently small.
\end{itemize}
\end{assumption}
Trader $i$'s expected payoff is then
\begin{align*}
u_i(k_i;k_{-i}) = \frac{k_i}{k_i+\sum_{-i}k_{-i}}\cdot V - C(k_i)
\label{eq:baseline-payoff-general}
\end{align*}
We solve for symmetric pure-strategy Nash equilibria in latency technology investment level, which the following proposition characterizes.

\begin{proposition}\label{prop:eq-baseline}
There exists a unique symmetric pure-strategy Nash equilibrium where each trader chooses investment level $k^*>0$ into latency technology.
\end{proposition}

We now introduce an auction for artificial time priority $T$ in the matching engine or sequencer taking place before the arrival of the trading opportunity. The auction format is a sealed-bid second-price auction where the trader with the highest bid pays the second-highest bid to the auctioneer. Traders now participate in a sequential game consisting of the following stages:
\begin{enumerate}
    \item \textbf{Auction:} Trader $i$ submits a bid $b_i$. The highest bidder
    wins the time priority for this opportunity and pays the second-highest bid.
    \item \textbf{Latency Race:} Trader $i$ chooses the level to invest into latency technology based on the results of the auction: $k_w^{(i)}$ if they win, and $k_l^{(i)}$ if they lose.
\end{enumerate}
Instead of the value of a single profitable opportunity, the value $V$ can now represent the total expected value of profitable opportunities occurring during the time window in which this time advantage is to be allocated. We assume that this advantage exponentially tilts the baseline contest win probabilities for those who did not win the auction by $\exp(-k_w\lambda T)\in[0,1]$, where $k_w$ is the auction winner's investment level and $\lambda$ captures the sensitivity of the latency technology to artificial time priority. Under this setup, a loser $j$'s expected payoff in the continuation (latency race) game:
\begin{gather*}
    u_l^{(j)}(k_l^{(j)};k_w,k_l^{(-j)}) = V\left(\frac{k_l^{(j)}}{k_w+k_l^{(j)}+\sum_{-j}k_l^{(-j)}}\cdot\exp(-k_w\lambda T)\right)-C(k_l^{(j)}).
\end{gather*}
Adding up each auction loser's probability of winning the opportunities yields the auction winner's probability of winning and continuation payoff:
\begin{gather*}
    u_w(k_w;k_l^{(j)}) = V\left(1-\frac{\sum_jk_l^{(j)}}{k_w+\sum_jk_l^{(j)}}\cdot\exp(-k_w\lambda T)\right)-C(k_w).
\end{gather*}
The following result characterizes the latency race subgame. 

\begin{proposition}\label{prop:eq-submission}
There exists a unique symmetric pure-strategy Nash equilibrium $(k_l^*,k_w^*)$  in the latency race subgame where $k_l^*$ and $k_w^*$ are the levels invested into latency technology by each trader conditional on losing and winning the auction, respectively, and satisfy $0 < k_l^* < k_w^*$. Moreover, letting $u_l^*$ and $u_w^*$ denote loser and winner equilibrium continuation payoffs, we have $0 < u_l^* < u_w^*$.
\end{proposition}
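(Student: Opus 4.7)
The plan is to exploit strict concavity of each arbitrageur's payoff in their own action to reduce the submission subgame to a pair of first-order conditions, then manipulate the FOCs to pin down the ordering $k_w^* > k_l^*$ and to eliminate one variable so that existence and uniqueness follow from a one-dimensional monotonicity argument.

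First, I would verify that both payoffs are strictly concave in the own strategy on $\R_+$. For the loser this is immediate from differentiating $k/(A+k)$ twice. For the winner one has to show that $g(k_w) = \tfrac{L}{k_w+L}e^{-k_w\lambda T}$ is convex in $k_w$, where $L=(n-1)k_l$; expanding the second derivative gives a sum of three positive terms, so $u_w$ is strictly concave and each player has a unique best response. Writing $S = k_w + (n-1)k_l$, the symmetric FOCs are $Ve^{-k_w\lambda T}(k_w+(n-2)k_l)/S^2 = C$ for a representative loser and $Ve^{-k_w\lambda T}(n-1)k_l[\,1/S + \lambda T\,]/S = C$ for the winner. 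Equating the two left-hand sides and simplifying yields the clean identity
\begin{equation*}
k_w \;=\; k_l\bigl(1 + (n-1)\lambda T\, S\bigr),
\end{equation*}
which directly implies $k_w > k_l > 0$ for any positive solution since $T,\lambda > 0$ and $n \ge 2$.

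Next, for existence and uniqueness I would use this identity to reduce to one unknown. On the admissible range where $(n-1)\lambda T k_l < 1$, one can solve for $S$ and hence $k_w$ as explicit functions of $k_l$; substituting into the loser FOC produces a single equation $\Phi(k_l) = 0$. I would then show that $\Phi$ is continuous, positive as $k_l \to 0^+$ (where the marginal benefit of a first copy dominates $C$), negative as $k_l$ approaches the boundary (where $k_w$ blows up and linear costs dominate), and strictly monotone using strict concavity of $u_l^{(j)}$ together with monotonicity of the implicit map $k_l \mapsto k_w$. This yields a unique interior root $k_l^*$ and thus a unique $(k_w^*, k_l^*)$.

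Finally, for the payoff inequalities I would argue separately. $u_l^* > 0$ follows from strict concavity of the loser's payoff in $k_l^{(j)}$ together with $u_l^{(j)}(0;\cdot) = 0$ and the interior optimizer $k_l^* > 0$. For $u_w^* > u_l^*$, a one-line mimicking argument suffices: if the winner deviates to the loser's quantity $k_l^*$ against the same opponent profile, the time advantage $T>0$ strictly increases the probability that their earliest timestamp beats every loser's, so $u_w^* \ge u_w(k_l^*;\, k_l^*, \ldots, k_l^*) > u_l^*$. The main obstacle is the uniqueness step: the FOC system is coupled and nonlinear, so strict concavity of individual payoffs alone does not yield a unique symmetric fixed point, and the real work lies in the sign and monotonicity analysis of $\Phi(k_l)$ after the one-variable reduction.
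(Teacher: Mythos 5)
Your proposal follows essentially the same route as the paper's proof: strict concavity of each role's payoff, the symmetric first-order conditions, dividing them to obtain the identity $k_w-k_l=(n-1)k_l\lambda T\cdot S$ (which gives $k_w^*>k_l^*>0$ immediately), a one-variable reduction on the admissible range $(n-1)\lambda T k_l<1$ with a sign-change-plus-monotonicity argument, and a mimicking argument for $u_w^*>u_l^*$. The only substantive step you defer --- strict monotonicity of $\Phi$ --- is exactly where the paper does the explicit computation (it changes variables to $x=(n-1)k_l\lambda T$ and shows $\frac{d}{dx}\log H(x)<0$ directly), and that computation goes through, so your plan is sound.
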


Introducing an auctioned time advantage alters the equilibrium of the baseline latency race. The ordering $0<k_l^*<k_w^*$ established by Proposition \ref{prop:eq-submission} captures the key behavioral response. Intuitively, conditional on winning the auction, the marginal return to effort is amplified by the time advantage, so the winner invests more in latency. Conditional on losing, the time advantage reverses the slope of marginal returns, so the loser scales back effort.

Both types retain positive continuation value in equilibrium. Even the loser obtains a strictly positive expected payoff, since its probability of winning the contest remains positive absent complete dominance by the advantaged trader. The winner, however, enjoys a strictly larger continuation surplus: at any fixed level of effort, the time advantage increases its contest success probability, and the resulting gain is not fully dissipated by equilibrium effort adjustments. This surplus gap, $u_w^*-u_l^*$, determines willingness to pay for the time advantage in the auction stage. Hence, the mechanism can extract rents from the latency race while preserving participation incentives for individual traders.

\begin{proposition}\label{prop:eq-auction}
Let $b^* = u_w^* - u_l^*$. Then $(b^*,k_l^*,k_w^*)$ comprise the unique symmetric subgame-perfect Nash equilibrium in the full auction-submission sequential game.
\end{proposition}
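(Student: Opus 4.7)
The plan is to apply backward induction to the sequential game, leveraging Proposition~\ref{prop:eq-submission} to collapse the auction stage into a simple second-price problem. By Proposition~\ref{prop:eq-submission}, for any realized auction winner and set of losers, the submission subgame has a unique symmetric equilibrium yielding gross continuation payoffs $u_w^*$ and $u_l^*$. Because these continuation values depend only on \emph{which} role an arbitrageur occupies and not on the bids themselves, subgame perfection reduces the auction stage to a symmetric sealed-bid, second-price auction whose common ``prize spread'' is the deterministic scalar $u_w^* - u_l^*$.

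Given this reduction, I would first invoke the textbook Vickrey argument to show that $b^* = u_w^* - u_l^*$ is a weakly dominant bid: overbidding by $\delta > 0$ risks a clearing price in $(b^*, b^*+\delta]$ and hence converts a nonnegative surplus into a net loss on winning, while underbidding by $\delta$ risks losing at a price below $b^*$ at which winning would have been strictly profitable. This establishes existence of the SPNE $(b^*, k_l^*, k_w^*)$.

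For uniqueness in the class of symmetric SPNE, I would fix any candidate common bid $b$, assume the natural uniform tie-breaking among highest bidders, and compute the symmetric per-bidder payoff $\tfrac{1}{n}(u_w^*-b) + \tfrac{n-1}{n}u_l^*$. Comparing this with the payoffs from an infinitesimal upward deviation (win for sure, obtain $u_w^*-b$) and an infinitesimal downward deviation (lose for sure, obtain $u_l^*$), a short algebraic manipulation shows that the upward deviation is strictly profitable iff $b < b^*$ and the downward deviation is strictly profitable iff $b > b^*$. Hence $b=b^*$ is the only symmetric bid profile that deters both local deviations, which combined with the existence argument yields uniqueness of the symmetric SPNE.

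The main obstacle is conceptual rather than computational: second-price auctions generically admit a continuum of asymmetric equilibria in weakly dominated strategies (e.g., one bidder bidding arbitrarily high and all others bidding zero), so the uniqueness claim must be framed within the symmetric SPNE class. I would be explicit about this scope in the writeup, and emphasize that subgame perfection anchors continuation play at $(k_w^*, k_l^*)$ after every on- and off-path auction outcome, which is precisely what turns the tie-breaking deviation argument into a sharp characterization rather than a mere refinement.
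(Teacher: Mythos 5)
Your proposal is correct and follows essentially the same route as the paper's proof: backward induction via Proposition~\ref{prop:eq-submission} to reduce the auction stage to a common prize spread $v_A = u_w^* - u_l^*$, then the computation of the symmetric payoff $u_l^* + \tfrac{1}{n}(v_A - b)$ under uniform tie-breaking and the upward/downward deviation comparison showing that only $b = v_A$ deters both deviations. Your explicit caveat about asymmetric equilibria in weakly dominated strategies is a sensible clarification of the scope of the uniqueness claim, but it does not change the substance of the argument.
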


Proposition~\ref{prop:eq-auction} ties everything together by showing that the unique equilibrium of the latency race subgame endogenously pins down a unique equilibrium of the preceding auction. In the sealed-bid second-price auction, bidding this value is the
(unique symmetric) best response: any lower common bid can be profitably outbid to secure strictly positive surplus, while any higher common bid yields negative expected surplus relative to losing. As a result, $b^*=u_w^*-u_l^*$ is the unique symmetric equilibrium bid, and combining it with the submission strategies $(k_l^*,k_w^*)$ yields a unique symmetric subgame-perfect Nash equilibrium of the full sequential game.

Our key result compares the total amount of investment level expended by traders in equilibrium without and with the time-priority auction, denoted as $K^*$ and $K^*_A$, respectively. In the language of this section, we can express these as
\begin{gather*}
    K^* = Nk^*, \\ K^*_A = k_w^*+(N-1)k_l^*.
\end{gather*}
The following result shows that less total effort is spent to achieve latency under a time-priority auction.

\begin{theorem}\label{thm:comparison-general}
Traders invest less into latency technology under a time-priority auction than without such an auction, i.e.\ $K_A^* < K^*$. 
\end{theorem}

Without an auction, priority is determined entirely by latency, so all traders must invest in speed to remain competitive. This produces a symmetric arms race: each trader’s marginal gain from being slightly faster is offset by others doing the same, and in equilibrium the surplus associated with time priority are largely dissipated through latency investment. By contrast, when time priority is allocated via a single-winner auction, only the winner has strong incentives to invest further in latency, while the remaining traders face a systematic disadvantage and therefore scale back. The reduction in additional latency investment by non-winners more than offsets the additional investment by the winner, lowering aggregate speed expenditure. In effect, the auction compresses a continuous latency race into a one-shot competition for priority, eliminating secondary races among disadvantaged traders and converting dissipative investment into auction payments.

\section{The Blockchain Setting and Platform Revenue}

By modeling transaction submissions and random latency delays, we microfound the blockchain setting as a reduced-form instance of the general model with linear effort costs, linking observable on-chain behavior to latency investment. Here, the competition is over an MEV opportunity of common value $v>0$. A sequencer orders transactions according to a first-come, first-served (FCFS) rule applied to their arrival timestamps. Exactly one trader can exploit the MEV opportunity by being sequenced first with the proper transaction; all other copies revert, failing to capture the opportunity. Traders can submit multiple copies of the same MEV-exploiting transaction, incurring a linear per-copy cost that represents the gas cost of failed attempts. \footnote{We assume that traders submit transactions that are identical in recipient address and calldata, but have increasing nonces, meaning that failed attempts appear on-chain, such as the case for backruns and optimistic MEV.}

Each submitted transaction experiences a random delay before it is observed by the sequencer for inclusion. Formally, the timestamp is shifted by an i.i.d.\ delay $\varepsilon\sim F$ (with density $f$), capturing network latency. Suppose that trader $i$ submits $k_i$ copies of the transaction. Then the effective arrival timestamp relevant under FCFS is the minimum of $k_i$ i.i.d.\ draws from $F$:
\[
X_i(k_i)\;:=\;\min\{\varepsilon_{i1},\ldots,\varepsilon_{ik_i}\}.
\]
Under FCFS, trader $i$ captures the opportunity iff $X_i(k_i)\le \min_{j\neq i}X_j(k_j)$. Intuitively, submitting more copies reduces the minimum delay and increases the chance of being first. We assume that a successful transaction costs $g\in(0,v)$ in gas fees, and reverted transactions must pay a fraction $r\in(0,1)$ of the gas fees for a successful transactions.

For tractability, we assume that latency is exponentially distributed: a single copy has a delay of $\varepsilon\sim \mathrm{Exp}(\lambda)$ with rate $\lambda>0$. Then we have that
\[
X_i(k_i)\sim \mathrm{Exp}(k_i\lambda).
\]
In practice, $k_i$ should be a nonnegative integer, but for tractability, we relax this and allow traders to choose fractional amounts for the number of copies to send, expanding their action space to $k_i\in\R_+$. While this breaks the interpretation that $X_i(k_i)$ is the minimum of several i.i.d.\ draws from $F$, we still have that $X_i(k_i)\sim \text{Exp}(k_i\lambda)$.
Trader $i$'s expected payoff is then
\begin{align*}
u_i(k_i;k_{-i}) &= (v-(1-r)g)\cdot \Pr\!\left(X_i(k_i)\le \min_{j\neq i}X_j(k_j)\right) -rgk_i \\
&\coloneqq V\cdot \Pr\!\left(X_i(k_i)\le \min_{j\neq i}X_j(k_j)\right) -Ck_i
\label{eq:baseline-payoff-general}
\end{align*}
where we reparameterize with $V=v-(1-r)g$ and $C=rg$.\footnote{Note that the cost accounting is done by adding the revert costs for all transactions plus an additional gas fee if the transaction is successful.} Under independent exponentially-distributed latency delays, the win probability simplifies to
\begin{gather*}
    \Pr\!\left(X_i(k_i)\le \min_{j\neq i}X_j(k_j)\right) = \frac{k_i}{k_i+\sum_{j\neq i}k_j},
\end{gather*}
resembling the Tullock contest-like win probability in the general model. Under linear costs, the equilibrium number of copies has a simple closed form:
\begin{gather*}
        k^* = \frac{N-1}{N^2}\cdot\frac{V}{C}.
\end{gather*}

Using this expression, we can compute important equilibrium quantities. Let $K^*$ be the total number of transactions submitted in equilibrium. This represents the level of on-chain spam, since $K^*-1$ of these transactions will revert on-chain, and has formula
\begin{gather*}
    K^* = Nk^* = \frac{N-1}{N}\cdot\frac{V}{C}.
\end{gather*}
We can also compute an individual trader's equilibrium utility $u_i^*$, and how the total value $V$ is split between users, captured by the term $W^*$, and revenue $R^*$ accrued by the sequencer/DAO from reverted gas fees:\footnote{To be precise in the case of Arbitrum, gas fees not spend on posting block data to the Ethereum mainnet go to the Arbitrum DAO as well.}
\begin{gather*}
    u_i^* = \frac{V}{N}-Ck^* = \frac{V}{N^2}, \\ 
    W^* = Nu_i^* = \frac{V}{N}, \\
    R^* = V-W^* = \frac{N-1}{N}\cdot V.
\end{gather*}
These corollaries highlight the core inefficiency motivating Timeboost-style mechanisms: competition to be first under FCFS induces redundant transaction copies. In later sections we show how auctioning a time advantage changes the competitive margin, reducing $K^*$ and reallocating surplus away from revert-driven waste and toward auction payments.

We now consider the time-priority auction for a time advantage of $T$ in the blockchain setting. As in the baseline model, each submitted copy experiences an exponential idiosyncratic latency delay before it is observed by the sequencer. Let $X_w(k_w)\sim \mathrm{Exp}(k_w\lambda)$ denote the winner’s earliest timestamp and $X_\ell^{(j)} \sim \mathrm{Exp}(k_l\lambda)$ denote loser $j$’s earliest timestamp, with being independent across arbitrageurs. Now, the winner’s effective arrival time is $X_w - T$, while each loser’s effective arrival time is still $X_\ell^{(j)}$. Therefore the winner succeeds if and only if
\[
X_w - T \le \min_j X_\ell^{(j)}.
\]
Similarly, a particular loser $j$ succeeds if and only if
\[
X_\ell^{(j)} < \min_{-j} X_\ell^{(-j)} \ \text{and}\ \ X_\ell^{(j)} \le X_w - T.
\]
Using standard probability calculations and exponential distribution properties, the continuation payoffs for the winner and loser are, respectively:
\begin{gather*}
    u_w(k_w;k_l^{(j)}) = V\left(1-\frac{\sum_jk_l^{(j)}}{k_w+\sum_jk_l^{(j)}}\cdot\exp(-k_w\lambda T)\right)-Ck_w \\
    u_l^{(j)}(k_l^{(j)};k_w,k_l^{(-j)}) = V\left(\frac{k_l^{(j)}}{k_w+k_l^{(j)}+\sum_{-j}k_l^{(-j)}}\cdot\exp(-k_w\lambda T)\right)-Ck_l^{(j)},
\end{gather*}
coinciding with those of the general model.

Evaluating the same important quantities under Timeboost, we first have that the total number of transactions submitted in Timeboost equilibrium, denoted $K^*_A$, is
\begin{gather*}
    K^*_A = k_w^*+(N-1)k_l^*.
\end{gather*}
Since the winner pays $u_w^*-u_l^*$ in Timeboost equilibrium, all arbitrageurs receive an equilibrium payoff of $u_l^*$. The division of the surplus value $V$ between arbitrageurs and sequencer, denoted by components $W^*_A$ and $R^*_A$, respectively, is
\begin{gather*}
    W^*_{K^*} = Nu_l^*, \\ 
    R^*_{K^*} = V-W^* =  V - Nu^*_l.
\end{gather*}

\begin{proposition}\label{prop:comparison}
Arbitrageurs send fewer total transactions and the sequencer earns more revenue in equilibrium under Timeboost than in the baseline setting, i.e.\ $K^*_{tb} < K^*$ and $R^*_{tb} > R^*$.
\end{proposition}

In the blockchain setting with linear costs, latency investment manifests as duplicate transaction submissions. The efficiency gain from the time-priority auction corresponds to a reduction in redundant submissions, freeing blockspace that would otherwise be consumed by failed race attempts. Interestingly, when considering sequencer/DAO revenue, one would expect there to be a tradeoff between auction revenue and less gas fees from reverted duplicate transactions, but Proposition \ref{prop:comparison} shows that equilibrium revenue unambiguously increases under a time-priority auction. Hence, this result indicates that time-priority auctions in blockchains, such as Timeboost, convert what would have been dissipated in revert-related costs for incremental latency improvement into direct payments for a substantial ``latency improvement'' via artificial time advantages.

\section{Time-Priority Auctions with Multiple Winners}

The analysis thus far has focused on a single-winner auction that allocates a single time advantage. We now generalize the mechanism to allow multiple winners, each receiving the same time advantage. This extension allows us to study how the allocation of priority along the extensive margin---the number of traders granted priority---interacts with the intensive margin---the size of the time advantage, in shaping equilibrium user effort and platform revenue. Additionally, holding the number of winners fixed, we examine how key equilibrium outcomes vary with the size of the time advantage. By varying the number of priority slots and the magnitude of each priority slot, we can characterize how competition, effort, and revenue respond to broader access to priority and more concentrated priority.

Let $n_w,n_l\in\N$ denote the number of winners and losers, respectively, with $n_w+n_l=N$. We now assume that traders participate in a (sealed-bid) \textit{generalized second-price auction} when there are multiple winners. In such an auction, the $n_w$ highest bids receive the time advantage, and the $i$-th highest bidder pays the value of the $(i+1)$-th highest bid. While this auction format is not dominant-strategy incentive compatible, it is the natural extension of the sealed-bid second-price auction used in the single-winner model and admits weak Nash equilibria where each winner pays the continuation value of having priority, similarly to the single-winner case.

Letting $K_w=k_w^{(i)}+\sum_{-i}k_w^{(-i)}$, $K_l=k_l^{(j)}+\sum_{-j}k_l^{(-j)}$, and $K=K_w+K_l$. Under exponential random delays, continuation payoffs are given by
\begin{gather*}
    u_w^{(i)}(k_w^{(i)};k_w^{(-i)},k_l^{(j)}) = V\cdot\frac{k_w^{(i)}}{K_w}\left(1-\frac{K_l}{K}\cdot\exp(-\lambda T K_w )\right)-Ck_w^{(i)}, \\
    u_l^{(j)}(k_l^{(j)};k_w^{(i)},k_l^{(-j)}) = V\left(\frac{k_l^{(j)}}{K}\cdot\exp(-\lambda TK_w)\right)-Ck_l^{(j)}.
\end{gather*}
Similarly to the previous section, we search for interior type-symmetric equilibria $k_l^*,k_w^*>0$ where all auction losers choose $k_l^*$ and all auction winners choose $k_w^*$ in the resulting continuation game. We show that if such an interior type-symmetric equilibrium exists, then $0<k_l^*<k_w^*$ and $0<u_l^*<u_w^*$, extending Proposition \ref{prop:eq-submission} to the multi-winner case. This places the ``value'' of the time-priority slot at $u_w^*-u_l^*$, and a standard auction theory argument shows that the symmetric bid profile $b^*=u_w^*-u_l^*$ constitutes a Nash equilibrium in the generalized second-price auction. The combined profile $(b^*,k_l^*,k_w^*)$ is thus \textit{type-symmetric} a subgame-perfect Nash equilibrium.

Our key results in the section concern the analysis of the multi-winner setting along the intensive and extensive margins. Proposition \ref{thm:fix-n-opt-t} extends the effort and revenue results from the single-winner environment to the multi-winner setting. The same economic forces apply here: time-priority becomes a more reliable determinant of winning the trading opportunity, lowering the marginal value of additional submission intensity. We expand on this result by examining total effort and platform revenue as the size of the time advantage itself varies. 

The comparative statics on effort and revenue are globally monotonic in advantage size in the single-winner case, as more time priority further increases the value of the priority slot, redirecting more financial resources into the auction rather than effort. With multiple winners, we show that there exists an interval $[0,\bar T)$ such that if $T\in[0,\bar T)$, a unique type-symmetric equilibrium exists. Along this interval, equilibrium total effort decreases and platform revenue increases in $T$, indicating that some advantage is always better than none in terms of mitigating latency races and capturing revenue.

\begin{proposition}\label{thm:fix-n-opt-t}
Suppose that user costs are linear and contribute to platform revenue. For any number of winners, there exists a range of time advantage sizes for which a unique interior type-symmetric equilibrium $(k_l^*,k_w^*)$ exists. In this range, increasing the priority advantage reduces total effort and raises platform revenue.\footnote{In the single-winner case, a unique type-symmetric equilibrium exists for any $T\geq0$. Equilibrium total effort decreases and platform revenue increases monotonically in $T$.}
\end{proposition}

Proposition \ref{thm:fix-t-opt-n} characterizes the extensive margin of time-priority allocation. Increasing the number of winners spreads priority across more traders, presenting an interesting tradeoff of incentives. On one hand, competition between the loser and winner types is further dampened as auction losers are facing multiple users with time advantages, rather than just one. Yet, on the other hand, allowing for multiple winners now introduces competition between winners themselves. Although priority becomes less exclusive, the aggregate demand for priority remains high because more traders can benefit from obtaining it. In equilibrium, over a range $n_w\in[1,\bar n_w)$ where the symmetric equilibrium is well behaved, the former force dominates the latter, resulting in a decline in total effort and increase platform revenue. 

\begin{proposition}\label{thm:fix-t-opt-n}
Suppose that user costs are linear and contribute to platform revenue. When the total number of users in the environment is large, for any time advantage size, there exists a range of winner counts for which a unique type-symmetric equilibrium exists. In this range, increasing the number of winners reduces total effort and raises platform revenue.
\end{proposition}

Together, the propositions highlight distinct roles for the intensive and extensive margins. Increasing the size of the time advantage primarily weakens incentives to race conditional on priority allocation, while increasing the number of winners redistributes competitive pressure across traders. Both channels reduce wasteful competition and convert rent dissipation into platform revenue, but they operate through different economic mechanisms: the intensive margin changes the \textit{saliency} of priority, whereas the extensive margin changes its \textit{scope}.

\section{Empirical Analysis}

This section evaluates the model’s predictions using the introduction of Timeboost as an empirical implementation of a time-priority auction. As a market-design intervention that converts latency competition into explicit price competition for execution priority, the mechanism therefore provides a setting in which the allocation of time priority changes while trading opportunities and participant incentives remain otherwise comparable, allowing us to study how equilibrium behavior responds when priority is priced rather than obtained through racing.

Blockchain markets make this exercise feasible because latency competition is observable. In traditional HFT environments, investments are privately-held information and unobservable to researchers. By contrast, on blockchain platforms, attempts to win latency races manifest as bursts of identical transactions submitted within short intervals. We interpret these duplicate submissions as the on-chain analogue of speed investment and use them as a proxy for effort. Our model predicts that a time-priority auction should therefore reduce redundant submissions and redirect competition toward payments for priority.

Timeboost implements precisely such a change. Prior to adoption, transaction ordering on Arbitrum followed a first-come, first-served rule, under which traders competed by submitting multiple copies of the same transaction to increase the probability of early inclusion. Timeboost introduces a sealed-bid auction that grants a short execution advantage to winning bidders, replacing pure arrival-time competition with an explicit price-based allocation of priority. In the context of the model, this corresponds to shifting from a race for execution probability to an auction for execution rights.

We study this transition using transaction-level data across multiple Layer-2 blockchains as control groups and an event-study design around Timeboost’s deployment. Despite other Layer-2 blockchains ordering transactions via priority gas fee, ties among priority fees are still broken by timestamp, and many MEV opportunities on Layer-2 networks rely on being the first to exploit it in the zero priority fee regime. This results in a latency race in those settings as well for these MEV opportunities. By comparing outcomes on Arbitrum to other Layer-2 networks that did not introduce time-priority auctions, we isolate how explicitly pricing execution priority affects equilibrium behavior. There are three testable implications of our model:
\begin{enumerate}
    \item Timeboost decreases transaction spamming. (Theorem \ref{thm:comparison-general})
    \item Timeboost increases sequencer/DAO revenue. (Theorem \ref{prop:comparison})
\end{enumerate}
To test these implications, we conduct a panel data analysis using data across Layer 2 blockchains and a within-Arbitrum analysis.

\subsection{Data Collection}

We assemble a panel of on-chain activity for Arbitrum and a set of major EVM L2 comparison networks using Dune Analytics. We collect data from the Arbitrum, Optimism, Polygon, Base, and Avalanche (C-Chain) blockchains between February 17 and June 17 of 2025, a two-month window before and after the deployment of Timeboost, which happened on April 17, 2025.
Our objective is to measure spamming and wasteful competition among bots and searchers. This phenomenon is relevant both in FCFS sequencer settings, where arrival timing drives inclusion, and PGA environments, where actors repeatedly submit or adjust transactions to win priority. 

To measure spamming in a comparable way across chains, we look for bursts of near-identical transactions within short second-level windows. Setting a window length of 2 seconds, we track the transaction's sender, recipient, value, function selector, and calldata. If transactions with these same characteristics appear multiple times in the same burst window, we label all transactions beyond the first as repeated (redundant) attempts, interpreting them as a conservative proxy for automated resubmission and competitive duplication. This is motivated by common bot execution logic: when inclusion and ordering are uncertain, whether due to FCFS latency dispersion, PGA-style priority competition, or optimistic MEV transactions \citet{optimisticMEV}.


A key challenge is that the objects of interest, i.e.\ attempts to win ordering priority, are not easily identifiable in blockchain data. Unlike traditional exchanges, on-chain transactions do not have a well-labeled field identifying whether an action corresponds to arbitrage, liquidation, and so on. In particular, economically identical actions can be implemented through arbitrary contracts. Users frequently deploy custom contracts, proxy contracts, and externally-owned accounts (EOAs) that encode trading logic internally. Thus, many MEV-related transactions can be executed with minimal or even empty calldata (i.e.\ a call that triggers internal state-dependent execution). Since attempting to manually classify MEV-related transactions using contract allowlists or application binary interface (ABI) decoding is difficult and could exclude economically relevant activity, biasing measurement toward more well-established protocols, we adopt a protocol-agnostic approach: rather than focusing on a narrow set of known MEV contracts or function calls, we identify behavior consistent with a latency race, i.e.\ multiple submissions of the same economic action within a short interval.

This broader classification introduces potential false positives, such as automated resubmissions unrelated to strategic activity. Importantly, however, this noise is not specific to the treated chain: since the same data collection procedure is applied uniformly across all Layer-2 blockchains in the panel, any automation or wallet retry logic enters both treatment and control groups. In the difference-in-differences design, this noise will be absorbed by chain and time fixed effects. The key empirical identification strategy relies on differential changes in outcome variables of interest \textit{after} the introduction of Timeboost, rather than on the level of repeated activity. Remaining misclassification would mainly add noise, in fact attenuating estimates, rather than creating a false treatment effect.

Recent evidence supports interpreting repeated failed transactions on rollups as strategic competition. \cite{optimisticMEV} find that in L2s, searchers submit transactions that repeatedly probe liquidity pools for cyclic arbitrage opportunities, executing the trade if profitable and reverting the transaction otherwise; these types of transactions are responsible for up to 50\% of all gas fees on some networks (e.g.\ Base and Optimism). \citet{gogol2025firstspammed} find similar empirical evidence of spamming in L2s as an arbitrage strategy, and further document that searchers often prefer submitting duplicate transactions over bidding priority fees, even on networks using PGAs, indicating competition over arrival time rather than price. Their findings support that bursts of near-identical transactions within short time intervals---the behavioral signature that our data collection process leverages---is indicative of MEV-related competition on L2s. 

Another important nuance in the MEV landscape for L2s is how certain types of MEV differ depending on whether a trader can replace a pending transaction or must submit independent attempts. In ``top-of-block'' MEV (e.g.\ arbitrage or frontruns), traders repeatedly update a single transaction using the same nonce while adjusting timing or fees; each submission replaces the previous one, so only one execution can occur and losing attempts leave little on-chain footprint. In ``rest-of-block'' MEV (e.g.\ backruns, optimistic arbitrage), outcomes depend on realized block state, so replacement is infeasible and traders submit multiple transactions with increasing nonces. Thus, unsuccessful attempts revert and remain recorded, resulting in costly duplication. Because this distinction holds across both PGA and FCFS sequencing rules, our data measure a consistent object across networks: observable contest effort rather than protocol-specific behavior. These differences are summarized in Table \ref{tab:mev_grid}.

\vspace{1em}

\begin{table}[!h]
\centering
\begin{tabular}{l p{5.5cm} p{5.5cm}}
\toprule
 & \textbf{Top-of-block MEV} (e.g.\ arbitrage, frontruns)
& \textbf{Rest-of-block MEV} (e.g.\ backruns, optimistic MEV) \\
\midrule
\textbf{PGA L2s} & Contested via fee bidding wars; failed attempts are costless and leave no on-chain record & Contested via multi-nonce spamming with zero priority fee; failed attempts are costly and leave an on-chain record \\
\midrule
\textbf{FCFS L2s} & Contested via same-nonce spamming; failed attempts are costless and leave no on-chain record & Contested via multi-nonce spamming; failed attempts are costly and leave an on-chain record \\
\bottomrule
\end{tabular}
\caption{MEV Opportunity Types Across Execution Environments}
\label{tab:mev_grid}
\end{table}

Our empirical strategy focuses on this observable component of latency competition. While this does not account for all MEV activity, it measures the subset for which competition requires multiple independent submissions and therefore appears on chain. Crucially, this subset exists both on Arbitrum and on comparison Layer-2 networks, where ordering among these transactions is likewise determined by arrival time. The identification therefore compares how traders adjust observable effort when priority is priced instead of obtained through racing.



We first record the number of repeated transactions and the gas consumed by failed repeated transactions, denoted by $\textsf{RepTXs}_{c,t}$ and $\textsf{RepGas}_{c,t}$, where $c$ indexes a blockchain and $t$ indexes the time period. On Arbitrum after the implementation of Timeboost, the auction proceeds also contribute to the revenue. We thus collect data for each Timeboost auction round within our sample window, focusing on the payment, denoted by $\textsf{AuctionPayment}_t$, of the winner (the second-highest bid). The $\textsf{Revenue}_{c,t}$ variable then integrates revenue from gas fees and auctions across Layer 2s:
\begin{gather*}
    \textsf{Revenue}_{c,t} = \textsf{RepGas}_{c,t}+\textsf{AuctionPayment}_t\cdot1_{\{c=\textsf{Arbitrum}\}}.
\end{gather*}
We collect a set of controls, including base gas fees and DEX volume on each blockchain, as well as returns and volatility of Ether (relative to US dollars). While Arbitrum, Optimism, and Base use ETH for gas payments, Polygon and Avalanche use native tokens; to account for this, we convert values into ETH. All variables are aggregated at daily frequencies.


\vspace{1em}

\begin{table}[h]
\centering
\renewcommand{\arraystretch}{1.25} 
\begin{tabular}{l r r r r r r r r}
\hline
 & \multicolumn{4}{c}{\textsf{RepTXs}} & \multicolumn{4}{c}{\textsf{Revenue}} \\
\cline{2-5}\cline{6-9}
Chain & \textbf{Mean} & SD & Min & Max & \textbf{Mean} & SD & Min & Max \\
\hline
Arbitrum     & 68{,}416  & 92{,}305  & 2{,}850  & 643{,}983 & 57.14 & 78.64 & 0.322 & 592.3 \\
Avalanche    & 2{,}774   & 4{,}625   & 304      & 39{,}127  & 0.026  & 0.046  & 0.001 & 0.262 \\
Base         & 216{,}987 & 142{,}519 & 62{,}578 & 857{,}769 & 0.503  & 0.879  & 0.108 & 8.215 \\
Optimism     & 43{,}092  & 56{,}968  & 2{,}701  & 516{,}531 & 0.012  & 0.017  & 0.002 & 0.156 \\
Polygon      & 11{,}290  & 7{,}686   & 2{,}577  & 36{,}718  & 0.001  & 0.002  & 0.000 & 0.009 \\
\hline
\end{tabular}
\caption{\centering Summary statistics by chain for total repeated transactions and sequencer revenue.}
\label{tab:summ-stats}
\end{table}

Summary statistics in Table \ref{tab:summ-stats} show substantial heterogeneity across chains in all three metrics of interest. Base and Arbitrum exhibit the highest average repeated transaction counts, indicating more activity and frequent competition over MEV, while Polygon and Avalanche are far lower in levels. Failure rates also differ markedly: Arbitrum and Polygon have relatively high mean failure rates while Optimism stands out with a much lower mean failure share. Revenue varies by orders of magnitude as well, as Arbitrum’s mean is far larger than the other chains, likely due to higher base gas fees and the magnitude of transactions processed. 




\subsection{Cross-Network Analysis}

We estimate the causal effect of Timeboost on transaction spamming and sequencer revenue using a two-way fixed-effects (TWFE) design that compares changes on Arbitrum (the treated chain) before and after the Timeboost launch to contemporaneous changes on a set of comparison L2 networks with no major changes to their sequencer ordering rules. The identifying variation is thus cross-sectional (treated vs. control chains) interacted with time-series variation (post- vs. pre-implementation). We supplement the baseline specification with controls for fee conditions and market intensity, respectively: an L2 base fee measure in Ether (\textsf{BaseGasL2}) and within-chain DEX volume in log USD as a DeFi activity proxy (\textsf{VolumeDEX}).

All outcome variables and time-varying covariates (except for the binary treatment indicator) are standardized (z-scored) prior to estimation. This choice is practical and methodological: chains differ by orders of magnitude in baseline transaction counts, gas usage, and revenue, and raw-unit levels can cause the regression to be dominated by the largest networks, conflating scale with treatment response. Standardization produces estimates in standard-deviation units, improving comparability across chains and outcomes and making effect sizes interpretable on a common scale. Importantly, standardization does not change the sign or statistical significance patterns relative to monotone transformations.

\paragraph{Empirical Specification and Identification.}
Our specification is given by the following model:
\begin{gather*}
    Y_{c,t} = \beta\cdot\textsf{Post$\times$Treated}_{c,t} +\gamma^\top X_{c,t} + \alpha_c+\delta_t + \varepsilon_{c,t}
\end{gather*}
where the dependent variable of interest (repeated transactions or revenue) is
\begin{gather*}
    Y\in\{\log\textsf{RepTXs},\log\textsf{Revenue}\},
\end{gather*}
$\textsf{Post$\times$Treated}_{c,t}=1_{\{c=\textsf{Arbitrum},\ t\ge\textsf{2025-04-17}\}}$ denotes the treatment effect (i.e.\ the differential change on Arbitrum after Timeboost relative to the change on control chains over the same period),
\begin{gather*}
    X\subseteq\{\textsf{BaseGasL2},\textsf{VolumeDEX}\}
\end{gather*}
is a set of controls, and $\varepsilon_{c,t}$ is idiosyncratic noise. We include chain fixed effects $\alpha_c$ that absorb time-invariant differences across networks (e.g., average activity levels, application mix, sequencer infrastructure) and day fixed effects $\delta_t$ that absorb common shocks shared across chains (e.g., market-wide volatility, macro crypto news, broad changes in user demand).

A key identification assumption is parallel trends conditional on fixed effects and controls: absent Timeboost, would Arbitrum’s outcomes have behaved similarly to the control L2s, after conditioning on fixed effects and control variables? This assumption is plausible because all networks share the same EVM execution environment and no other contemporaneous changes to ordering rules occurred on the control chains during the sample period. The outcomes we study arise from latency-sensitive MEV opportunities exist on all rollups, with ordering ultimately depending on arrival time even under priority-fee mechanisms. Hence, traders face the same competitive environment across chains. Chain fixed effects absorb persistent differences in scale and application composition, while day fixed effects absorb market-wide shocks.

\begin{table}[!htbp] \centering \footnotesize
\begin{tabular}{@{\extracolsep{2pt}}lcccc}
\\[-2.3ex]\hline
\hline \\[-2.3ex]
\\[-2.3ex] & (1) & (2) & (3) & (4) \\
\\[-2.3ex] & log\,\textsf{RepTXs} & log\,\textsf{RepTXs} & log\,\textsf{Revenue} & log\,\textsf{Revenue} \\
\hline \\[-2.3ex]
\textsf{Post$\times$Treated} & -0.699$^{**}$ & -0.698$^{***}$ & 0.726$^{***}$ & 0.711$^{***}$ \\
& (0.281) & (0.254) & (0.123) & (0.112) \\
\textsf{BaseGasL2} & & 0.096$^{}$ & & 0.150$^{**}$ \\
& & (0.076) & & (0.071) \\
\textsf{VolumeDEX} & & 0.159$^{}$ & & 0.078$^{}$ \\
& & (0.125) & & (0.150) \\
\hline \\[-2.3ex]
 Observations & 605 & 605 & 605 & 605 \\
 N. of groups & 5 & 5 & 5 & 5 \\
 $R^2$ & 0.025 & 0.041 & 0.001 & 0.032 \\
\hline
\hline \\[-2.3ex]
\multicolumn{5}{l}{
Includes chain and day fixed effects. Standard errors are clustered by chain.}  \\
\textit{Note:} & \multicolumn{4}{r}{$^{*}$p$<$0.1; $^{**}$p$<$0.05; $^{***}$p$<$0.01} \\
\end{tabular}
\caption{This table reports two-way fixed-effects estimates of the impact of Timeboost adoption on Arbitrum relative to control Layer-2 blockchains. The sample period spans February 17 to June 17, 2025, with Timeboost implemented on April 17, 2025. The dependent variables are the natural logarithm of repeated transactions (columns 1-2) and the natural logarithm of platform revenue in ETH (columns 3-4). Repeated transactions are defined as bursts of identical transactions (same sender, recipient, value, function selector, and calldata) within 2-second windows. Platform revenue includes gas fees from reverted transactions and, for Arbitrum post-Timeboost, auction payments from express lane winners. Post×Treated is an indicator equal to one for Arbitrum observations after April 17, 2025. BaseGasL2 is the Layer-2 base gas fee in ETH. VolumeDEX is the natural logarithm of DEX trading volume in USD.}
\label{tab:did-result-main}
\end{table}

\vspace{-1em}

\paragraph{Discussion of Results.}

Table \ref{tab:did-result-main} reports two-way fixed-effects estimates of the effect of Timeboost adoption on repeated transaction activity and sequencer revenue. The coefficient on \textsf{Post$\times$Treated} is negative and statistically significant (about $-0.70$ standard deviations), indicating that after Timeboost implementation, Arbitrum experienced a sizable decline in repeated transactions relative to the comparison L2s. This estimate is nearly unchanged when adding controls, suggesting that this reduction is not merely driven by coincident changes in transaction costs or market-wide trading activity. 

The \textsf{PostTreated} coefficient for revenue is positive and significant, implying that after Timeboost, revenue from reverted transactions and gas fees increased overall.  In the appendix, we provide additional robustness checks: we vary the window of measuring repeated transactions from 2 seconds to 5 seconds, and change the duration of the window before and after the Timeboost adoption date. These results carry through, with the decline in repeated transactions and gain in revenue after Timeboost adoption being significant across these changes. Additionally, as clustered standard errors may introduce  bias when the number of groups is small, we also quantify the uncertainty of our estimated coefficients using heteroskedasticity-robust standard errors (``HC3''), find that the empirical results hold under this as well. 


\subsection{Within-Arbitrum Analysis}

Our cross-chain empirical design controlled variation across chains and time by introducing fixed effects.
Now, we complement the analysis with a within-Arbitrum analysis by directly incorporating variables that would potentially influence transaction volumes.
This exercise does not replace the TWFE identification strategy, as it lacks an external control group, but provides a useful additional robustness check: if Timeboost meaningfully altered submission behavior, we should observe significant changes on Arbitrum itself, particularly after conditioning on proxies for activity (DEX volume) and transaction costs (L1 and L2 base fees).

We estimate a model of the form
\begin{equation*}
Y_t = \alpha + \beta\cdot\textsf{Post}_t + \gamma^\top X_t + \varepsilon_t
\end{equation*}
where the dependent variable of interest is
\begin{gather*}
    Y\in\{\log\textsf{RepTXs},\log\textsf{Revenue}\},
\end{gather*}
$\textsf{Post}_t=1_{\{t\geq \textsf{2025-04-17}\}}$ is a post-implementation indicator, and
\begin{gather*}
    X\subseteq\{\textsf{VolumeDEX},\textsf{BaseGasMain},\textsf{BaseGasArb},\textsf{VolatilityETH}\}
\end{gather*}
is a vector of time-varying controls. In all specifications we include a measure of DEX volume and base gas fees on Ethereum; we additionally include Arbitrum base gas fee and ETH volatility to capture additional shifts in arbitrage intensity. As a robustness check (reported in the appendix), we also verify that conclusions are not sensitive to alternative bandwidths around the implementation date and alternative time intervals for short-window repetition. Here, we do \textit{not} standardize variables before estimation.

\paragraph{Results.}
Table \ref{tab:the-result-main} reports the within-Arbitrum estimates. The post-Timeboost indicator is negative and statistically significant for \textsf{RepTXs}, implying a sizable decrease in repeated submissions after the implementation date even after controlling for DEX volume and fee conditions; the estimated coefficients remains similar when adding mainnet base fees and market controls. Similarly to the cross-chain analysis, there is a positive significant relation between the post-Timeboost period and revenue, and the post-Timeboost period is associated with higher failure rates among non-timeboosted transactions. 

\vspace{1em}

\begin{table}[!htbp]
\centering
\setlength{\tabcolsep}{4pt} 
\renewcommand{\arraystretch}{1.15}
\begin{tabular}{@{\extracolsep{2pt}}lcccc}
\hline
\hline
& (1) & (2) & (3) & (4) \\
 & log\,\textsf{RepTXs} & log\,\textsf{RepTXs} & log\,\textsf{Revenue} & log\,\textsf{Revenue} \\
\hline
\textsf{Intercept} & -9.423 & -8.955 & 2.621 & -4.377 \\
& (5.991) & (6.110) & (7.327) & (7.449) \\
\textsf{Post} & -0.578$^{**}$ & -0.453$^{*}$ & 0.858$^{**}$ & 0.594$^{*}$ \\
& (0.238) & (0.240) & (0.354) & (0.360) \\
\textsf{VolumeDEX} & 1.004$^{***}$ & 0.882$^{***}$ & -0.026 & 0.427 \\
& (0.299) & (0.312) & (0.364) & (0.382) \\
\textsf{BaseGasMain} & -0.050 & -0.146 & 0.631$^{***}$ & 0.690$^{***}$ \\
& (0.158) & (0.155) & (0.189) & (0.198) \\
\textsf{BaseGasArb} & & 187.350$^{***}$ & & -94.886$^{*}$ \\
& & (54.911) & & (50.138) \\
\textsf{VolatilityETH} & & 0.262 & & -3.216$^{***}$ \\
& & (0.768) & & (0.846) \\
\hline
Observations & 121 & 121 & 121 & 121 \\
$R^2$ & 0.173 & 0.257 & 0.068 & 0.141 \\
Adjusted $R^2$ & 0.152 & 0.218 & 0.044 & 0.096 \\
\hline
\hline
\multicolumn{5}{l}{Standard errors are computed using the Newey-West estimator.} \\
\textit{Note:} & \multicolumn{4}{r}{$^{*}$p$<$0.1; $^{**}$p$<$0.05; $^{***}$p$<$0.01} \\
\end{tabular}
\caption{This table reports estimates from a within-Arbitrum analysis of the impact of Timeboost adoption on redundant transactions and platform revenue. The sample consists of daily observations on Arbitrum from February 17 to June 17, 2025, with Timeboost implemented on April 17, 2025. The dependent variables are the natural logarithm of repeated transactions (columns 1-2) and the natural logarithm of platform revenue in ETH (columns 3-4). Repeated transactions are defined as bursts of identical transactions (same sender, recipient, value, function selector, and calldata) within 2-second windows. Platform revenue includes gas fees from reverted transactions and, post-Timeboost, auction payments from express lane winners. Post is an indicator equal to one for observations after April 17, 2025. VolumeDEX is the natural logarithm of DEX trading volume in USD on Arbitrum. BaseGasMain is the Ethereum mainnet base gas fee in ETH. BaseGasArb is the Arbitrum base gas fee in ETH. VolatilityETH is the daily return volatility of ETH/USD.}
\label{tab:the-result-main}
\end{table}

DEX volume is positively related to repeated transactions, consistent with higher trading activity increasing the number of contested opportunities and associated transaction flow. The coefficient on \textsf{BaseGasArb} is large and positive in the repeated transactions regressions, which is expected as Arbitrum gas fees are determined by congestion and would be higher under more spammed transactions. The coefficients on \textsf{BaseGasMain} and \textsf{BaseGasArb} for the revenue regressions tell a more interesting story: revenue from reverted gas fees and auction proceeds appears to be driven by base gas on Ethereum, with Arbitrum gas fees and volatility having a second-order effect. Overall, the within-chain estimates provide complementary evidence that Timeboost's implementation is associated with a reduction in repeated submissions on Arbitrum.

\section{Conclusion}

This paper studies how explicit markets for time priority reshape competition for time-sensitive trading opportunities. Our theoretical framework isolates a simple mechanism: when access to a time advantage is allocated via a market—rather than via a latency race in a public queue—equilibrium incentives to invest in speed shift from wasteful effort (duplicate submissions or latency technology) to auction bids. In our general model, a single-winner time-priority mechanism reduces aggregate latency investment and reallocates surplus from dissipation into payments for priority. In the blockchain specialization, where latency effort manifests as duplicate on-chain transactions, this translates into less spam, lower revert-driven resource waste, and higher platform-level revenue from time-priority sales.

We bring this framework to data by exploiting the introduction of Timeboost, an express-lane time-priority auction on Arbitrum, and combining cross-chain and within-chain empirical designs. The cross-chain difference-in-differences evidence shows a large and robust post-adoption decline in repeated transaction activity on Arbitrum relative to comparable Layer-2 networks, consistent with reduced redundant competition in the public queue. At the same time, we document an increase in protocol revenue from gas fees and auction proceeds, indicating that a larger share of MEV-related surplus is captured through explicit payments rather than revert-driven dissipation. A within-Arbitrum analysis yields qualitatively similar patterns, reinforcing the interpretation that the introduction of an explicit time-priority mechanism causally reshapes the structure of competition for MEV. Overall, our theoretical and empirical findings provide novel insights into the design of exchanges in the world of high-frequency trading, presenting another approach to mitigating externalities that arise from speed technology competition while improving revenue implications for the platforms themselves.

\newpage

\bibliographystyle{aer}
\bibliography{timeboost}

@misc{gogol2025firstspammed,
  title  = {First-Spammed, First-Served: MEV Extraction on Fast-Finality Blockchains},
  author = {Gogol, Krzysztof M. and Schneider, Manvir and Tessone, Claudio J.},
  year   = {2025},
  note   = {arXiv:2506.01462},
  url    = {https://arxiv.org/abs/2506.01462}
}

@article{CapponiJia2025DEX,
  author  = {Agostino Capponi and Ruizhe Jia},
  title   = {Liquidity Provision on Blockchain-Based Decentralized Exchanges},
  journal = {Review of Financial Studies},
  year    = {2025},
  volume  = {38},
  number  = {10},
  pages   = {3040--3085},
  doi     = {10.1093/rfs/hhaf046}
}

@article{BudishLeeShim2020,
  author  = {Eric Budish and Robin S. Lee and John J. Shim},
  title   = {Will the Market Fix the Market? A Theory of Stock Exchange Competition and Innovation},
  journal = {American Economic Review},
  year    = {2020},
  volume  = {110},
  number  = {6},
  pages   = {1607--1635},
  doi     = {10.1257/aer.20181486}
}

@article{CapponiJiaWang2025MEV,
  author  = {Agostino Capponi and Ruizhe Jia and Kanye Ye Wang},
  title   = {Maximal Extractable Value and Allocative Inefficiencies in Public Blockchains},
  journal = {Journal of Financial Economics},
  year    = {2025},
  volume  = {172},
  pages   = {104132},
  doi     = {10.1016/j.jfineco.2025.104132}
}

@article{BudishCramtonShim2015,
  author  = {Budish, Eric and Cramton, Peter and Shim, John},
  title   = {The High-Frequency Trading Arms Race: Frequent Batch Auctions as a Market Design Response},
  journal = {Quarterly Journal of Economics},
  year    = {2015},
  volume  = {130},
  number  = {4},
  pages   = {1547--1621},
  doi     = {10.1093/qje/qjv027}
}

@article{BiaisFoucaultMoinas2015,
  author  = {Biais, Bruno and Foucault, Thierry and Moinas, Sophie},
  title   = {Equilibrium Fast Trading},
  journal = {Journal of Financial Economics},
  year    = {2015},
  volume  = {116},
  number  = {2},
  pages   = {292--313},
  doi     = {10.1016/j.jfineco.2015.03.004}
}

@article{Woodward2018IEX,
  author  = {Woodward, Megan},
  title   = {Bumping Up the Competition: The Influence of IEX's Speed Bump in US Financial Markets},
  journal = {SSRN Electronic Journal},
  year    = {2018},
  doi     = {10.2139/ssrn.3202843},
  note    = {Available at SSRN: \url{https://ssrn.com/abstract=3202843}}
}

@inproceedings{optimisticMEV,
  title     = {Optimistic MEV in Ethereum Layer 2s: Why Blockspace Is Always in Demand},
  author    = {Solmaz, Ozan and Heimbach, Lioba and Vonlanthen, Yann and Wattenhofer, Roger},
  booktitle = {7th Conference on Advances in Financial Technologies (AFT 2025)},
  year      = {2025},
  doi       = {10.4230/LIPIcs.AFT.2025.28}
}

@misc{li2023mevhappy,
  title  = {MEV Makes Everyone Happy under Greedy Sequencing Rule},
  author = {Li, Yuhao and Zhang, Mengqian and Li, Jichen and Chen, Elynn and Chen, Xi and Deng, Xiaotie},
  year   = {2023},
  note   = {arXiv:2309.12640},
  url    = {https://arxiv.org/abs/2309.12640}
}

@misc{mclaughlin2025clvr,
  title  = {CLVR Ordering of Transactions on AMMs},
  author = {McLaughlin, Robert and Chemaya, Nir and Liu, Dingyue and Malkhi, Dahlia},
  year   = {2025},
  note   = {arXiv:2408.02634},
  url    = {https://arxiv.org/abs/2408.02634}
}

@inproceedings{zhu2026revert,
  author    = {Zhu, Brian Z. and Wan, Xin and Moallemi, Ciamac C. and Robinson, Dan and Bachu, Brad},
  title     = {Quantifying the Value of Revert Protection},
  booktitle = {Financial Cryptography and Data Security (FC 2025)},
  editor    = {Garman, Christina and Moreno-Sanchez, Pedro},
  series    = {Lecture Notes in Computer Science},
  volume    = {15752},
  pages     = {72--88},
  publisher = {Springer, Cham},
  year      = {2026},
  doi       = {10.1007/978-3-032-07035-7_5},
  url       = {https://doi.org/10.1007/978-3-032-07035-7_5}
}

@misc{messias2025expresslane,
  title  = {The Express Lane to Spam and Centralization: An Empirical Analysis of Arbitrum's Timeboost},
  author = {Messias, Johnnatan and Ferreira Torres, Christof},
  year   = {2025},
  note   = {arXiv:2509.22143},
  url    = {https://arxiv.org/abs/2509.22143}
}

@article{OHara2015HFMM,
  author  = {O'Hara, Maureen},
  title   = {High Frequency Market Microstructure},
  journal = {Journal of Financial Economics},
  year    = {2015},
  volume  = {116},
  number  = {2},
  pages   = {257--270},
  doi     = {10.1016/j.jfineco.2015.01.003},
  url     = {https://doi.org/10.1016/j.jfineco.2015.01.003}
}

@article{HasbrouckSaar2013,
  author  = {Hasbrouck, Joel and Saar, Gideon},
  title   = {Low-Latency Trading},
  journal = {Journal of Financial Markets},
  year    = {2013},
  volume  = {16},
  number  = {4},
  pages   = {646--679},
  doi     = {10.1016/j.finmar.2013.05.003},
  url     = {https://doi.org/10.1016/j.finmar.2013.05.003}
}

@inproceedings{DaianEtAl2019,
  author    = {Daian, Philip and Goldfeder, Steven and Kell, Tyler and Li, Yunqi and Zhao, Xueyuan and Bentov, Iddo and Breidenbach, Lorenz and Juels, Ari},
  title     = {Flash Boys 2.0: Frontrunning, Transaction Reordering, and Consensus Instability in Decentralized Exchanges},
  booktitle = {2020 IEEE Symposium on Security and Privacy (SP)},
  year      = {2020},
  pages     = {910--927},
  doi       = {10.1109/SP40000.2020.00010},
  url       = {https://arxiv.org/abs/1904.05234}
}

@inproceedings{WeintraubEtAl2022,
  author    = {Weintraub, Ben and Ferreira Torres, Christof and Nita-Rotaru, Cristina and State, Radu},
  title     = {A Flash(bot) in the Pan: Measuring Maximal Extractable Value in Private Pools},
  booktitle = {Proceedings of the 22nd ACM Internet Measurement Conference (IMC '22)},
  year      = {2022},
  publisher = {Association for Computing Machinery},
  address   = {Nice, France},
  doi       = {10.1145/3517745.3561448},
  url       = {https://arxiv.org/abs/2206.04185}
}

@inproceedings{TorresEtAl2024,
  author    = {Ferreira Torres, Christof and Mamuti, Albin and Weintraub, Ben and Nita-Rotaru, Cristina and Shinde, Shweta},
  title     = {Rolling in the Shadows: Analyzing the Extraction of MEV Across Layer-2 Rollups},
  booktitle = {Proceedings of the 2024 ACM SIGSAC Conference on Computer and Communications Security (CCS '24)},
  year      = {2024},
  publisher = {Association for Computing Machinery},
  pages     = {2591--2605},
  doi       = {10.1145/3658644.3690259},
  url       = {https://arxiv.org/abs/2405.00138}
}

@misc{Flashbots2020,
  author       = {Obadia, Alex},
  title        = {Flashbots: Frontrunning the {MEV} Crisis},
  year         = {2020},
  howpublished = {\url{https://medium.com/flashbots/frontrunning-the-mev-crisis-40629a613752}},
  note         = {Accessed January 26, 2026}
}

@inproceedings{MamageishviliEtAl2023,
  author    = {Mamageishvili, Akaki and Kelkar, Mahimna and Schlegel, Jan Christoph and Felten, Edward W.},
  title     = {Buying Time: Latency Racing vs. Bidding in Transaction Ordering},
  booktitle = {5th Conference on Advances in Financial Technologies (AFT 2023)},
  series    = {Leibniz International Proceedings in Informatics (LIPIcs)},
  volume    = {282},
  year      = {2023},
  publisher = {Schloss Dagstuhl -- Leibniz-Zentrum f{\"u}r Informatik},
  note      = {Preprint available as arXiv:2306.02179},
  url       = {https://arxiv.org/abs/2306.02179}
}

@article{Lehar2025,
  author  = {Lehar, Alfred and Parlour, Christine},
  title   = {Decentralized Exchange: The Uniswap Automated Market Maker},
  journal = {Journal of Finance},
  year    = {2025},
  volume  = {80},
  number  = {1},
  pages   = {321--374},
  doi     = {10.1111/jofi.13405},
  url     = {https://doi.org/10.1111/jofi.13405}
}

@article{AquilinaBudishONeill2022,
  author = {Aquilina, Matteo and Budish, Eric and O'Neill, Peter},
  title = {Quantifying the High-Frequency Trading ``Arms Race''},
  journal = {The Quarterly Journal of Economics},
  volume = {137},
  number = {1},
  pages = {493--564},
  year = {2022},
  month = {February},
  doi = {10.1093/qje/qjab032},
  url = {https://doi.org/10.1093/qje/qjab032}
}

@misc{gogol2026sandwich,
  title  = {How to Serve Your Sandwich? MEV Attacks in Private L2 Mempools},
  author = {Gogol, Krzysztof and Schneider, Manvir and Gorzny, Jan and Tessone, Claudio},
  year   = {2026},
  note   = {arXiv:2601.19570},
  url    = {https://arxiv.org/abs/2601.19570}
}

\newpage

\renewcommand{\thetable}{A.\arabic{table}}
\setcounter{table}{0}

\appendix

\section{Proofs of Results}

\subsection{Proof of Proposition \ref{prop:eq-baseline}}

Let $K_{-i}=\sum_{-i}k_{-i}$. Differentiating, we have, for $k_i>0$
\[
\frac{\partial u_i}{\partial k_i}(k_i;k_{-i}) 
= V\,\frac{K_{-i}}{(k_i + K_{-i})^2} - C'(k_i),
\]
\[
\frac{\partial^2 u_i}{\partial k_i^2}(k_i;k_{-i})
= -2V\,\frac{K_{-i}}{(k_i + K_{-i})^3} - C''(k_i) < 0
\]
since $C''>0$ and $K_{-i}\ge 0$. Thus for each fixed $K_{-i}\ge 0$, $u_i(\cdot;K)$ is strictly concave on $(0,\infty)$, so player $i$ has at most one interior best response. A strategy profile $k_i=k$ for all $i\in[N]$ is an interior symmetric equilibrium iff
\begin{gather*}
    \frac{\partial u_i}{\partial k_i}(k;(N-1)k) = 0,
\end{gather*}
which is exactly~
\begin{align*}
f(k) &:= V\cdot\frac{(N-1)k}{(NK)^2} - C'(k) \\[0.25\baselineskip]
       &= V\cdot\frac{N-1}{N^2}\cdot\frac{1}{k} - C'(k) = 0.
\end{align*}
We now show $f$ has a unique positive root. Differentiating,
\[
f'(k) = -V\cdot\frac{N-1}{N^2}\cdot\frac{1}{k^2} - C''(k) < 0\quad\text{for all }k>0,
\]
so $f$ is strictly decreasing on $(0,\infty)$.  As $k\downarrow 0$, the term $(1/k)$ dominates while $C'(k)$ remains finite (by continuity of $C'$), so $\lim_{k\downarrow 0} f(k) = +\infty$. As $k\to\infty$, strict convexity and $C'(k)$ increasing imply $C'(k)\to\infty$, while the first term $O(1/k)\to 0$, hence $\lim_{k\to\infty} f(k) = -\infty$. By continuity and the intermediate value theorem, there exists at least one $k^*>0$ with $f(k^*)=0$. By strict monotonicity, this root is unique. At that $k^*$, strict concavity of $u_i(\cdot;K^*)$  where $K^*=Nk^*$ implies that $k^*$ is the unique best response to $k^*$, so $(k^*,\dots,k^*)$ is the unique symmetric equilibrium.

\subsection{Proof of Proposition \ref{prop:eq-submission}}





For a given loser $j$, let $K_{-l}=\sum_{-j}k_l^{(-j)}$. The loser’s payoff is
\[
u_j(k_l^{(j)};k_w,k_l^{(-j)}) = V e^{-k_w \lambda T}\,\frac{k_l^{(j)}}{k_l^{(j)}+k_w+K_{-l}} - C(k_l^{(j)}).
\]
Differentiating, we have, for $k_l^{(j)}>0$,
\[
u_j'(k_l^{(j)};k_w,k_l^{(-j)})
 = V e^{-k_w \lambda T}\,\frac{K_{-l}}{(k_l^{(j)}+k_w+K_{-l})^2} - C'(k_l^{(j)}),
\]
\[
u_j''(k_l^{(j)};k_w,k_l^{(-j)})
 = -2V e^{-k_w \lambda T}\,\frac{K_{-l}}{(k_l^{(j)}+k_w+K_{-l})^3} - C''(k_l^{(j)}) < 0.
\]
Thus $u_j(\cdot)$ is strictly concave on $(0,\infty)$ when at least two losers are present. Using an argument analogous to that in the proof of Proposition~\ref{prop:eq-baseline}, the derivative $u_j'(k)$ is continuous, strictly decreasing in $k$, tends to $+\infty$ as $k\downarrow 0$ and to $-\infty$ as $k\to\infty$, so for each $(k_w,B,\lambda T)$ there is a unique interior maximizer $k>0$ solving $u_j'(k)=0$. 

From the winner's persepective, let $K_l=\sum_jk_l^{(j)}$. The winner's payoff satisfies
\begin{gather*}
u'_w(k_w;k_l^{(j)})
= V K_l e^{-k_w \lambda T}\frac{\lambda T(k_w+K_l) + 1}{(k_w+K_l)^2} - C'(k_w) \\[0.25\baselineskip]
u''_w(k_w;k_l^{(j)})
= -2VKe^{-k_w \lambda T}\frac{\lambda T (k_w+K_l) +1}{(k_w+K_l)^3}
 - V K e^{-k_w \lambda T}\frac{\lambda T^2 (k_w+K_l)^2}{(k_w+K_l)^4} - C''(k_w) < 0
\end{gather*}
for $k_w>0$ and $K_l>0$. Similarly $u'_w$ is continuous, strictly decreasing in $k_w$, with limit $+\infty$ as $k_w\downarrow 0$ and $-\infty$ as $k_w\to\infty$, so there is a unique interior best response $k_w>0$. For a symmetric loser profile $k=k_\ell$, set $K_{-l}=(N-2)k_\ell$ and $K_l=(N-1)k_\ell$. The first-order conditions are then
\begin{gather*}
C'(k_w) 
=
V\left(\frac{(N-1)k_l}{((N-1)k_l+k_w)^2}+\frac{(N-1)k_l}{(N-1)k_l+k_w}\,\lambda T\right)\exp(-k_w\lambda T),
\\[0.25\baselineskip]
C'(k_l) 
=
V\left(\frac{(N-2)k_l+k_w}{((N-1)k_l+k_w)^2}\right)\exp(-k_w\lambda T).
\end{gather*}
Define the right-hand sides as $R_w(k_l,k_w)$ and $R_l(k_l,k_w)$.

We first claim that for each fixed $k_w>0$, at most one $k_l>0$ solves $C'(k_l)=R_l(k_l,k_w)$. For a fixed $k_w>0$, note that
\begin{gather*}
    \frac{d}{dk_l}\left(\frac{(N-2)k_l+k_w}{((N-1)k_l+k_w^2}\right) = \frac{-(N-1)(N-2)k_l-Nk_w}{((N-1)k_l+k_w)^2}<0
\end{gather*}
so $R_l(k_w,k_w)$ is decreasing in $k_l$. Since $C'(k_l)$ is strictly increasing in $k_l$ by convexity of $C$. Therefore the equation $C'(k_l)=R_l(k_l,k_w)$ has at most one solution $k_l>0$ for each $k_w>0$. Where a solution exists, denote it by $k_l=\kappa_l(k_w)$. We then show that $\kappa_l(k_w)$ is strictly decreasing in $k_w$. For fixed $k_l = 0$, note that
\begin{gather*}
    \frac{d}{dk_w}\left(\frac{(N-2)k_l+k_w}{\bigl((N-1)k_l+k_w\bigr)^2}\right)
=\frac{-(N-3)k_l-k_w}{((N-1)k_l+k_w)^3}<0.
\end{gather*}
Thus, $R(k_l,k_w)$ is strictly decreasing in $k_w$ for each fixed $k_l>0$. Along the solution curve $C'(\kappa_l(k_w))=R_l(\kappa_l(k_w),k_w)$, if $k_w$ increases, the RHS decreases. Since $C'$ is strictly increasing, $\kappa_l(k_w)$ must strictly decrease to maintain equality. Hence, $\kappa_l(k_w)$ is strictly decreasing.

We then show that $k_w\mapsto R_w(\kappa_l(k_w),k_w)$ is strictly decreasing. Define
\begin{gather*}
    G(k_w):=C'(k_w)-\rho_w(\kappa_l(k_w),k_w).  
\end{gather*}

Because $C'$ is strictly increasing, $k_w\mapsto C'(k_w)$ is strictly increasing. Let $K=(N-1)k_l+k_w$ and write
\[
R_w(k_l,k_w)=V(N-1)\,k_l\left(\frac{1}{K^2}+\frac{\lambda T}{K}\right)e^{-k_w\lambda T},
\]
Along $k_l=\kappa_l(k_w)$ we have: (i) $\kappa_l$ is strictly decreasing, so $k_l$ decreases; (ii) $K$ increases in $k_w$ (since $k_w$ increases and $k_l$ decreases only weakly), and $\frac{1}{K^2}+\frac{\lambda T}{K}$ is strictly decreasing in $K$; (iii) $e^{-k_w\lambda T}$ is strictly decreasing in $k_w$. Thus $R_w(\kappa_l(k_w),k_w)$ is strictly decreasing in $k_w$. Therefore $G(k_w)$ is strictly increasing, so $G(k_w)=0$ has at most one solution $k_w^*$. Then $k_l^*=\kappa_l(k_w^*)$ is uniquely determined.

We finally show that if a solution to the above system exists, then $k_l^*<k_w^*$. Let $R_w^*=C'(k_w^*)$, $R_l^*=C'(k_l^*)$, and $K^* = (N-1)k_l^*+k_w^*$. Then
\begin{align*}
R_w^*-\rho_l^*
&=V e^{-k_w^*\lambda T}
\left(
\frac{(N-1)k_l^*-((N-2)k_l^*+k_w^*)}{(K^*)^2}
+\frac{(N-1)k_l^*}{K^*}\lambda T
\right)\\
&=V e^{-k_w^*\lambda T}
\left(
\frac{k_l^*-k_w^*}{(K^*)^2}
+\frac{(N-1)k_l^*}{K^*}\lambda T
\right).
\end{align*}
Assume, towards a contradiction, that $k_w^*\le k_l^*$. Then $k_l^*-k_w^*\ge 0$ and the second term is strictly positive because $\lambda T>0$, $N-1>0$, $k_l^*>0$, and $K^*>0$. Hence $R_w^*-\rho_l^*>0$, i.e.\ $R_w^*>R_l^*$. Using the FOCs this gives $C'(k_w^*)>C'(k_l^*)$. But $C$ convex implies $C'$ is nondecreasing, so $k_w^*\le k_l^*$ would imply $C'(k_w^*)\le C'(k_l^*)$, a contradiction. Thus $k_l^*<k_w^*$.

We finally show that $0<u_l^*<u_w^*$. Write
\begin{gather*}
    u_l(k_l^*;k_w) = V\cdot\frac{k_l^*}{K(k_w)}\cdot e^{-k_w\lambda T}-C(k_l^*)
\end{gather*}
where $K(k_w)=(N-1)k_l^*+k_w$. Then $\partial u_l/\partial k_w<0$, so because $k_w^*>k_l^*$, so we have
\begin{gather*}
    u_l^* = u_l(k_l^*;k_w^*) < u_l(k_l^*;k_l^*)
\end{gather*}
From strict concavity of $u_w(\cdot;k_l^*)$ and the FOC, $k_w^*$ is the unique maximizer (assuming an equilibrium exists), so
\begin{gather*}
    u_w^* = u_w(k_w^*;k_l^*) \ge u_w(k_l^*;k_l^*)
\end{gather*}
At $k_w=k_l^*$ we have $K(k_l^*)=Nk_l^*$. Therefore
\[
u_w(k_l^*;k_l^*)
=
V\!\left(1-\frac{N-1}{N}e^{-k_l^*\lambda T}\right)-C(k_l^*),
\]
\[
u_l(k_l^*;k_l^*)
=
V\!\left(\frac{1}{N}e^{-k_l^*\lambda T}\right)-C(k_l^*).
\]
Subtracting the second equation from the first, we have,
\[
u_w(k_l^*;k_l^*)-u_l(k_l^*;k_l^*)
=
V\left(1-e^{-k_l^*\lambda T}\right)>0
\]
Combining the chain of inequalities yields
\[
u_w^*-u_l^*
\ge
u_w(k_l^*;k_l^*)-u_l(k_l^*;k_l^*)
=
V\bigl(1-e^{-k_l^*\lambda T}\bigr)
>0.
\]


\subsection{Proof of Proposition \ref{prop:eq-auction}}

Let $N\ge 3$ bidders participate in a sealed-bid second-price auction and
\[
v_A \;:=\; u_w^* - u_l^*
\]
denote each bidder's (common, known) incremental value of winning the Express Lane right,
i.e.\ the difference between being the winner and being a loser in the continuation game.
Conditional on the submission-subgame equilibrium being played after the auction,
a bidder who wins and pays price $p$ obtains utility $u_w^* - p$, while a bidder who loses
obtains utility $u_l^*$. Thus the incremental gain from winning at price $p$ is $v_A-p$.

Consider a symmetric pure bid profile in which all bidders submit the same bid $b\ge 0$.
Under uniform tie-breaking, each bidder wins with probability $1/N$ and pays the
(second-highest) price $b$ whenever they win. Hence each bidder's expected utility is
\begin{equation*}
U(b)
=
\frac{1}{N}(u_w^* - b) + \frac{N-1}{N}u_l^*
=
u_l^* + \frac{1}{N}(v_A-b).
\label{eq:U_of_b}
\end{equation*}
Fix such a symmetric profile and consider a unilateral deviation by bidder $i$:
\begin{enumerate}
    \item If $b<v_A$, bidder $i$ can deviate to $b+\varepsilon$ for any sufficiently small $\varepsilon>0$.
Then bidder $i$ wins with probability $1$ and pays the second-highest bid, which remains $b$.
The deviation payoff is therefore $u_w^* - b = u_l^* + (v_A-b)$, which strictly exceeds
$U(b)=u_l^*+(v_A-b)/N$ because $v_A-b>0$ and $N\ge 3$. Hence no symmetric profile with $b<v_A$
can be a Nash equilibrium.
    \item If $b>v_A$, bidder $i$ can deviate to a sufficiently low bid (e.g.\ $0$), guaranteeing that
they lose and obtain $u_l^*$. Under the symmetric profile, their payoff is $U(b)=u_l^*+(v_A-b)/N<u_l^*$.
Thus any symmetric profile with $b>v_A$ admits a profitable deviation and cannot be a Nash equilibrium.
    \item If $b=v_A$, then $U(v_A)=u_l^*$. Deviating to any bid $b_i<v_A$ guarantees losing and yields $u_l^*$,
which equals $U(v_A)$. Deviating to any bid $b_i>v_A$ guarantees winning and paying $v_A$, yielding
$u_w^*-v_A=u_l^*$, again equal to $U(v_A)$. Therefore no deviation yields a strictly higher payoff,
so $b=v_A$ is a best response to the symmetric profile.
\end{enumerate}
Combining (1)--(3), it follows that the only symmetric pure bid profile that can be a Nash equilibrium is $b^*=v_A$, and it is indeed a Nash equilibrium. We proceed to verify sequential rationality in every subgame.
\begin{enumerate}
    \item From Proposition \ref{prop:eq-submission}, after any realized auction winner, the continuation strategies form a Nash equilibrium
of the corresponding submission subgame. Hence no player can profitably deviate after observing
the auction outcome.
    \item Given that the continuation equilibrium is played after the auction, each bidder's value for winning
the Express Lane right is $v_A=u_w^*-u_l^*$. By the previous proposition, bidding $b^*=v_A$
is a best response (indeed, the unique symmetric pure-strategy equilibrium bid) to the others bidding $b^*$.
Therefore no bidder can profitably deviate in the auction stage.
\end{enumerate}
Thus, the overall strategy profile is subgame-perfect.

\subsection{Proof of Theorem \ref{thm:comparison-general}}

Multiply the loser's FOC by $(N-1)$ and add the winner's FOC. Using $K^*=(N-1)k_l^*+k_w^*$, we have
\begin{align}
(N-1)\cdot C'(k_l^*)+C'(k_w^*)
&=
V e^{-k_w^*\lambda T}\Bigg[
(N-1)\frac{(N-2)k_l^*+k_w^*}{(K^*)^2}
+\frac{(N-1)k_l^*}{(K^*)^2}
+\frac{(N-1)k_l^*\lambda T}{K^*}
\Bigg]\notag\\
&=
V e^{-k_w^*\lambda T}\left[
\frac{(N-1)\big((N-2)k_l^*+k_w^*+k_l^*\big)}{(K^*)^2}
+\frac{(N-1)k_l^*\lambda T}{K^*}
\right]\notag\\
&=
V e^{-k_w^*\lambda T}\left[
\frac{(N-1)D}{(K^*)^2}
+\frac{(N-1)k_l^*\lambda T}{K^*}
\right]\notag\\
&=
\frac{V(N-1)}{K^*}\,e^{-k_w^*\lambda T}\,(1+k_l^*\lambda T).
\label{eq:sumIdentity}
\end{align}
Let $\theta:=1/N$. Since $\frac{K^*}{N}=\theta k_w^*+(1-\theta)k_l^*$ and $\log C'$ is convex,
\begin{equation*}\label{eq:logconvex}
\log C'\!\left(\frac{K^*}{N}\right)
\le \theta\log C'(k_w^*)+(1-\theta)\log C'(k_l^*),
\end{equation*}
so exponentiating gives
\begin{equation}\label{eq:geomMeanBound}
C'\!\left(\frac{K^*}{N}\right)
\le (C'(k_w^*))^{1/N}(C'(k_l^*))^{(N-1)/N}.
\end{equation}
By weighted AM--GM,
\begin{equation}\label{eq:amgm}
(C'(k_w^*))^{1/N}(C'(k_l^*))^{(N-1)/N}
\le \frac{C'(k_w^*)+(N-1)\cdot C'(k_l^*)}{n}.
\end{equation}
Combining \eqref{eq:geomMeanBound}--\eqref{eq:amgm} with \eqref{eq:sumIdentity} yields
\begin{equation}\label{eq:CprimeDoverN}
C'\!\left(\frac{K^*}{N}\right)
\le \frac{1}{n}\cdot \frac{V(N-1)}{K^*}\,e^{-k_w^*\lambda T}\,(1+k_l^*\lambda T).
\end{equation}
For all $t\ge 0$, $(1+t)e^{-t}\le 1$, with strict inequality for $t>0$.
Let $t:=k_w^*\lambda T>0$. Since $k_l^*<k_w^*$, we have $1+k_l^*\lambda T\le 1+ k_w^*\lambda T=1+t$.
Thus
\begin{equation}\label{eq:strictFactor}
e^{-k_w^*\lambda T}(1+k_l^*\lambda T)
\le e^{-t}(1+t) < 1.
\end{equation}
Substituting \eqref{eq:strictFactor} into \eqref{eq:CprimeDoverN} gives
\begin{gather*}\label{eq:keyStrict}
C'\!\left(\frac{K^*}{N}\right)
<
\frac{V(N-1)}{NK^*}.
\end{gather*}
Define, for $x>0$,
\begin{gather*}
F(x):=C'\!\left(\frac{x}{N}\right)-\frac{V(N-1)}{Nx}.
\end{gather*}
Because $C'$ is increasing and $x\mapsto -\frac{V(N-1)}{Nx}$ is increasing on $(0,\infty)$,
the function $F$ is increasing. By the baseline FOC, we have $F(K^*_{\mathrm{baseline}})=0$.
By \eqref{eq:keyStrict}, $F(K)<0$. Since $F$ is increasing, $F(K^*)<F(K^*_{\mathrm{baseline}})$ so $K^*<K^*_{\mathrm{baseline}}$.

\subsection{Proof of Proposition \ref{prop:comparison}}

\paragraph{Less Redundant Submissions.}

We show that when $C'(k)=C$, noting the abuse of notation, a solution to the equilibrium system always exists. The FOCs are now
\begin{gather*}
C = V\left(\frac{(N-1)k_l}{K^2}+\frac{(N-1)k_l}{K}\lambda T\right)e^{-k_w\lambda T},
\label{eq:lin1}\\[0.25\baselineskip]
C = V\left(\frac{(N-2)k_l+k_w}{K^2}\right)e^{-k_w\lambda T},
\label{eq:lin2}
\end{gather*}
where $K:=(N-1)k_l+k_w$.

For $k_l,k_w>0$ both right-hand sides are positive, so dividing the winner's FOC by the loser's FOC cancels $c$, $V$, and $e^{-k_w\lambda T}$ and yields
\[
1=\frac{\frac{(N-1)k_l}{K^2}+\frac{(N-1)k_l}{K}\lambda T}{\frac{(N-2)k_l+k_w}{K^2}}
=\frac{(N-1)k_l+(N-1)k_lK\lambda T}{(N-2)k_l+k_w}.
\]
Rearranging gives
\[
(N-1)k_l+(N-1)k_lK\lambda T=(N-2)k_l+k_w
\implies 
k_w=k_l+(N-1)\lambda T\,k_lK.
\]
Substituting $K=(N-1)k_l+k_w$ and solving for $k_w$ in terms of $k_l$ yields
\begin{equation}\label{eq:kw_of_kl}
\kappa_w(k_l)=\frac{k_l\bigl(1+(N-1)^2\lambda T\,k_l\bigr)}{1-(N-1)\lambda T\,k_l},
\qquad
k_l\in\Bigl(0,\frac{1}{(N-1)\lambda T}\Bigr),
\end{equation}
which is strictly positive on the stated interval. Along the curve
$\kappa_w(k_l)$, the two right-hand sides of the FOCs coincide. Define, for $k_l\in\bigl(0,\frac{1}{(N-1)\lambda T}\bigr)$,
\[
F(k_l):=
V\left(\frac{(N-2)k_l+\kappa_w(k_l)}{\bigl((N-1)k_l+\kappa_w(k_l)\bigr)^2}\right)
\exp\!\bigl(-\lambda T\,\kappa_w(k_l)\bigr),
\]
so that the loser's FOC is equivalent to $F(k_l)=c$ once $k_w=\kappa_w(k_l)$.

We claim $\lim_{k_l\downarrow 0}F(k_l)=+\infty$ and
$\lim_{k_l\uparrow \frac{1}{(N-1)\lambda T}}F(k_l)=0$.
First, from \eqref{eq:kw_of_kl} one has $\kappa_w(k_l)\sim k_l$ as $k_l\downarrow 0$.
Hence $K=(N-1)k_l+\kappa_w(k_l)\sim N k_l$ and $(N-2)k_l+\kappa_w(k_l)\sim (N-1)k_l$, so
\[
\frac{(N-2)k_l+\kappa_w(k_l)}{(K^*)^2}
\sim
\frac{(N-1)k_l}{(N k_l)^2}
=\frac{N-1}{N^2}\cdot \frac{1}{k_l}\to +\infty,
\]
while $\exp(-\lambda T\,\kappa_w(k_l))\to 1$. Thus $F(k_l)\to+\infty$ as $k_l\downarrow 0$. Second, as $k_l\uparrow \frac{1}{(N-1)\lambda T}$, the denominator
$1-(N-1)\lambda Tk_l\downarrow 0$ in \eqref{eq:kw_of_kl}, so $\kappa_w(k_l)\to +\infty$,
which implies $\exp(-\lambda T\,\kappa_w(k_l))\to 0$.
Moreover the rational factor is $O(1/\kappa_w(k_l))$, hence remains bounded and tends to $0$.
Therefore $F(k_l)\to 0$ as $k_l\uparrow \frac{1}{(N-1)\lambda T}$.

Since $F$ is continuous on
$\bigl(0,\frac{1}{(N-1)\lambda T}\bigr)$ and takes values from $+\infty$ down to $0$,
the intermediate value theorem implies that for every $c>0$ there exists
$k_l^*\in\bigl(0,\frac{1}{(N-1)\lambda T}\bigr)$ such that $F(k_l^*)=c$.
Setting $k_w^*:=k_w(k_l^*)>0$ yields a pair $(k_l^*,k_w^*)\in(0,\infty)^2$
satisfying the system. Hence $(k_l^*,k_w^*)$ solves the original system. The result follows naturally since a linear cost function $C(k)=ck$ is weakly convex, increasing, and has a weakly log-convex first-derivative.

\paragraph{More Sequencer Revenue.}

From our derived formulas for sequencer renuve in both the baseline setting and under Timeboost, this reduces to showing $u_l^* < V/N^2$ for any $T>0$. Let $K^*:=k_w^*+(N-1)k_l^*$ denote total intensity in the Express Lane equilibrium.
Using the losers' FOC to substitute for $C$ in the loser
payoff yields the identity
\begin{equation}
u_l^*
=
e^{-k_w^*\lambda T}\,V\left(\frac{k_l^*}{K^*}\right)^2.
\label{eq:ul-identity}
\end{equation}
Moreover, since $k_w^*>k_l^*$ when $T>0$, hence
$K^*=k_w^*+(N-1)k_l^* > N k_l^*$ and therefore $(k_l^*/K^*)^2 < 1/N^2$. Since
$e^{-k_w^*\lambda T}<1$ for $T>0$ and $k_w^*>0$, \eqref{eq:ul-identity} yields
\[
u_l^*
=
e^{-k_w^*\lambda T}\,V\left(\frac{k_l^*}{K^*}\right)^2
<
V\cdot \frac{1}{N^2}.
\]
Thus $u_l^*<V/N^2$, which implies the result.

\subsection{Preliminaries for 2-Type Allocations}

Suppose that $n_l, n_w\geq1$. Let $K_w=k_w^{(i)}+\sum_{-i}k_w^{(-i)}$, $K_l=k_l^{(j)}+\sum_{-j}k_l^{(-j)}$, and $K=K_w+K_l$. Payoffs are then
\begin{gather*}
    u_w^{(i)}(k_w^{(i)};k_w^{(-i)},k_l^{(j)}) = V\cdot\frac{k_w^{(i)}}{K_w}\left(1-\frac{K_l}{K}\cdot\exp(-\lambda T K_w )\right)-Ck_w^{(i)}, \\[0.25\baselineskip]
    u_l^{(j)}(k_l^{(j)};k_w^{(i)},k_l^{(-j)}) = V\left(\frac{k_l^{(j)}}{K}\cdot\exp(-\lambda TK_w)\right)-Ck_l^{(j)}.
\end{gather*}
For fixed winner $i$ and loser $j$, let $K_{-w} = \sum_{-i}k_w^{(-i)}$ and $K_{-l}=\sum_{-j}k_l^{(-j)}$. Differentiating, we have
\begin{gather*}
    \frac{du_w^{(i)}}{dk_w^{(i)}} = V\left[\frac{K_{-w}}{K_w^2}\left(1-\frac{K_l}{K}e^{-\lambda T K_w}\right)+\frac{k_w^{(i)}}{K_w}\left(\frac{K_l}{K^2}+\lambda T\frac{K_l}{K}\right)e^{-\lambda T K_w}\right] - C \\[0.25\baselineskip]
    \frac{du_l^{(j)}}{dk_l^{(j)}} = V\cdot\frac{K_w+K_{-l}}{K^2}\cdot e^{-\lambda T K_w} - C
\end{gather*}
Differentiating the loser's condition again, noting that \(K_w\) is constant in $k_l^{(j)}$ and \(\frac{dK}{dk_l^{(j)}}=1\), we have
\[
\frac{d^2 u_l^{(j)}}{d (k_l^{(j)})^2}
=
V\,(K_w+K_{-l})\,e^{-\lambda T K_w}\,\frac{d}{dk_l^{(j)}}(K^{-2})
=
-\,\frac{2V\,(K_w+K_{-l})}{K^3}\,e^{-\lambda T K_w} \le 0 .
\]
For the winner's condition, we have
\begin{gather*}
    \frac{d^2 u_w^{(i)}}{d(k_w^{(i)})^2}<0.
\end{gather*}
The proof is given in the Additional Proofs section.

Imposing symmetry, i.e.\ $K_l=n_lk_l$, $K_{-l}=(n_l-1)k_l$, $K_w=n_wk_w$, and $K_{-w}=(n_w-1)k_w$, the FOCs are then
\begin{gather*}
C
=
V\left[
\frac{n_w-1}{n_w^2\,k_w}
\left(
1-\frac{n_l k_l}{n_w k_w+n_l k_l}\,
e^{-\lambda T n_w k_w}
\right)
+
\frac{1}{n_w}
\left(
\frac{n_l k_l}{(n_w k_w+n_l k_l)^2}
+
\lambda T\,\frac{n_l k_l}{n_w k_w+n_l k_l}
\right)
e^{-\lambda T n_w k_w}
\right], \\[0.25\baselineskip]
C
=
V\,
\frac{n_w k_w+(n_l-1)k_l}{(n_w k_w+n_l k_l)^2}\,
e^{-\lambda T n_w k_w}.
\end{gather*}
Equilibrium effort is $K^*=n_wk_w^*+n_lk_l^*$ and equilibrium continuation payoffs are then
\begin{gather*}
    u_w^* = V\cdot\frac{1}{n_w}\left(1-\frac{n_lk_l^*}{n_lk_l^*+n_wk_w^*}\cdot\exp(-\lambda T n_wk_w^*)\right)-Ck_w^* \\[0.25\baselineskip]
    u_l^* = V\cdot\frac{k_l^*}{n_lk_l^*+n_wk_w^*}\exp(-\lambda T n_wk_w^*)-Ck_l^*
\end{gather*}
Let $W=n_wk_w$, $L=n_lk_l$, $K=W+L$, $\alpha_l=1-n_l^{-1}$, $\alpha_w=1-n_w^{-1}$, and $\tau=\lambda T$. Assume that $V,C, \lambda, T > 0$ and $\alpha_l,\alpha_w\in[0,1)$. Rewrite the FOCs as
\begin{gather*}
    F_w(W,L, T):=
\alpha_w\Bigl(\frac{1}{W}\cdot\frac{W+L(1-e^{-\tau W})}{W+L}\Bigr)
+(1-\alpha_w)\Bigl(\frac{L}{(W+L)^2}+\frac{\tau L}{W+L}\Bigr)\exp(-\tau W)-\rho \\[0.25\baselineskip]
F_l(W,L, T):=\frac{W+\alpha_l L}{(W+L)^2}e^{-\tau W}-\rho, \\[0.25\baselineskip]
\alpha_l = \frac{(N-1)(1-\alpha_w)-1}{N(1-\alpha_w)-1} \\[0.25\baselineskip]
u_w^* = V\left[(1-\alpha_w)\left(1-\frac{L}{W+L}e^{-\tau W}\right)-\rho(1-\alpha_w)W\right] \\[0.25\baselineskip]
u_l^* = V\left[(1-\alpha_l)\cdot\frac{L}{W+L}e^{-\tau W}-\rho(1-\alpha_l)L\right]
\end{gather*}

\subsection{Proof of Proposition \ref{thm:fix-n-opt-t}}

\paragraph{Less Redundant Submissions.}

Let $K( T)=n_w k_w( T)+n_l k_l( T)$ denote total equilibrium effort as a function of the advantage difference $ T=T$, where $(k_w( T),k_l( T))$ solves the symmetric equilibrium first-order conditions. At $ T=0$, the game is symmetric across all $N=n_w+n_l$ players, and the unique interior equilibrium $k_0$ is given by the baseline case.

Differentiating the symmetric equilibrium first-order conditions with respect to $ T$ and evaluating at $ T=0$ yields
\begin{gather*}
    k_w'(0)=\lambda n_l (N-1)k_0^2,\\[0.25\baselineskip]
    k_l'(0)=-\lambda n_w (N-1)k_0^2.
\end{gather*}
so $K'(0) = n_w k_w'(0)+n_l k_l'(0) = 0$. Differentiating again and solving for $k_w''(0)$ and $k_l''(0)$ yields
\begin{gather*}
k''_w(0)=-\frac{V^3\lambda^2}{C^3}\frac{n_l\,(N-1)^2}{N^6}
\Bigl(2n_l^3+5n_l^2n_w-3n_l^2+4n_ln_w^2-3n_ln_w+n_l+n_w^3\Bigr), \\[0.25\baselineskip]
k''_l(0)=-\frac{V^3\lambda^2}{C^3}\frac{n_w\,(N-1)^2}{N^6}
\Bigl(n_l^2n_w+n_l^2+2n_ln_w^2-n_ln_w+n_w^3-2n_w^2+n_w\Bigr), \\[0.25\baselineskip]
K''(0) = n_w k_w''(0)+n_l k_l''(0)
= -\frac{V^3\lambda^2}{C^3}
\frac{n_w n_l (N-1)^2}{N^5}
\bigl(2N(N-1)+1\bigr) < 0
\end{gather*}
since $n_w,n_l\ge1$ and $\lambda>0$. By Taylor's theorem,
\[
K'( T)=K'(0)+K''(0) T+o( T)
=K''(0) T+o( T).
\]
Since $K''(0)<0$, there exists$\delta>0$ such that $K'( T)<0$ for all $ T\in(0,\delta)$. Thus, total equilibrium effort is locally decreasing in $ T$ to the right of $0$.

\paragraph{More Sequencer Revenue.} Define $L( T)=n_l k_l^*( T)$, $W( T)=n_w k_w^*( T)$, $K( T)=W( T)+L( T)$, and $E( T)=e^{-\lambda T L( T)}$. From earlier, we have
\[
W'(0)=-\lambda n_w n_l (N-1)k_0^2,
\quad
L'(0)=\lambda n_w n_l (N-1)k_0^2,
\quad
E'(0)=-\lambda B(0)=-\lambda n_w k_0
\]
Define $f( T)=\frac{W( T)}{K( T)}E( T)$; then for the winner's payoff:
\begin{gather*}
    u_w( T)=\frac{V}{n_w}\bigl(1-f( T)\bigr)-Ck_w^*( T), \\[0.5\baselineskip]
    u_w'( T)=-\frac{V}{n_w}f'( T)-Ck_w'( T).
\end{gather*}
Using $K'(0)=0$ and $E(0)=1$, we have
\[
f'(0)=\frac{W'(0)}{K(0)}+\frac{W(0)}{K(0)}\cdot E'(0)
=-\lambda n_w n_l k_0.
\]
so, substituting the expression for $k_0$, it follows that
\[
u_w'(0)
=
\lambda n_l\bigl(Vk_0-C(N-1)k_0^2\bigr)\propto\frac{V^2}{C}\frac{(N-1)(2N-1)}{N^4}>0.
\]
For the loser's payoff, define $g( T)=\frac{k_l^*( T)}{K( T)}E( T)$; then
\begin{gather*}
    u_l( T)=Vg( T)-Ck_l^*( T), \\[0.25\baselineskip]
    u_l'( T)=Vg'( T)-Ck_l'( T)
\end{gather*}
Again using $K'(0)=0$,
\begin{gather*}
    g'(0)=\frac{k_l'(0)}{K(0)}+\frac{k_l(0)}{K(0)}E'(0)
=-\lambda n_w k_0
\end{gather*}
Thus $u_l'(0)
=
\lambda n_w\bigl(-Vk_0+C(N-1)k_0^2\bigr)
=
-\lambda n_w\bigl(Vk_0-C(N-1)k_0^2\bigr)<0$. Auction revenue $n_w(u_w-u_l)$ and sequencer revenue $V-Nu_l$ are thus increasing in $ T$ around a right-neighborhood of $ T=0$ for any $n_w,n_l\geq1$.

\subsection{Proof of Proposition \ref{thm:fix-t-opt-n}}

\paragraph{Less Redundant Submissions.} Fix $\tau\coloneqq\lambda T>0$ (slight abuse of notation) and $\rho\coloneqq C/V>0$. For each $N\ge 2$, let $(W(\alpha_w;N),L(\alpha_w;N))$ be an equilibrium branch satisfying
\begin{gather*}
    F_w(W(\alpha_w;N),L(\alpha_w;N),\alpha_w)=0, \\[0.25\baselineskip]
    F_l(W(\alpha_w;N),L(\alpha_w;N),\alpha_w)=0,
\end{gather*}
Let $J$ denote the Jacobian of the equilibrium system, i.e.\
\begin{gather*}
    J=
    \begin{pmatrix}
    F_{wW} & F_{wL} \\
    F_{lW} & F_{lL} 
    \end{pmatrix}.
\end{gather*}
From the baseline model with time-priority auction, there exists $W,L>0$ satisfying both equations at $\alpha_w=0$ for any $N\geq2$. Since the system has at most one interior solution, it follows that $J$ is nonsingular at $\alpha_w=0$ for any $N\geq2$ as well. We claim that
\[
\lim_{N\to\infty}\bigl(W'(0)+L'(0)\bigr)<0,
\]
where $W'(0)=\frac{dW}{d\alpha_w}\big|_{\alpha_w=0}$ and $L'(0)=\frac{dL}{d\alpha_w}\big|_{\alpha_w=0}$.
Let $K:=W+L$ and $E:=e^{-\tau W}$. Define
\begin{gather*}
    A(W,L):=\frac{1}{W}\cdot\frac{W+L(1-E)}{K} \\[0.25\baselineskip]
    B(W,L):=\frac{L}{K^2}+\frac{\tau L}{K}
\end{gather*}
Then FOCs can be written as 
\begin{gather*}
    F_w(W,L,\alpha_w)=\alpha_w \cdot A(W,L)+(1-\alpha_w)\cdot B(W,L)\cdot E-\rho \\[0.25\baselineskip]
    F_l(W,L,\alpha_w)=\frac{W+\alpha_l L}{K^2}E-\rho
\end{gather*}
where $\alpha_l=\frac{(N-1)(1-\alpha_w)-1}{N(1-\alpha_w)-1}$. At $\alpha_w=0$, we denote evaluation by a superscript $0$. 

Differentiate the equilibrium conditions with respect to $\alpha_w$:
\begin{gather*}
    F_{wW}W'+F_{wL}L'+F_{w\alpha_w}=0, \\[0.25\baselineskip]
F_{lW}W'+F_{lL}L'+F_{l\alpha_w}=0.
\end{gather*}
Then at $\alpha_w=0$, by Cramer's rule,
\begin{gather*}
    W'(0)=\frac{F_{wL}^0F_{l\alpha_w}^0-F_{lL}^0F_{w\alpha_w}^0}{\det J^0}, \\[0.25\baselineskip]
L'(0)=\frac{F_{lW}^0F_{w\alpha_w}^0-F_{wW}^0F_{l\alpha_w}^0}{\det J^0}.
\end{gather*}
Summing yields
\begin{gather*}\label{eq:sum}
K'(0)=
\frac{(F_{lW}^0-F_{lL}^0)\,F_{w\alpha_w}^0+(F_{wL}^0-F_{wW}^0)\,F_{l\alpha_w}^0}{\det J^0}.
\end{gather*}
From $F_w=\alpha_w A+(1-\alpha_w)BE-\rho$, differentiating and evaluating at $\alpha_w=0$ yields $F_{w\alpha_w}^0=A-BE$. At $\alpha_w=0$, the equilibrium condition $F_w=0$ gives $BE=\rho$, hence
\begin{equation*}\label{eq:fwalpha}
F_{w\alpha_w}^0=A-\rho.
\end{equation*}
Next, $\alpha_w$ enters $F_l$ only through $\alpha_l$, so
\[
F_{l\alpha_w}=\frac{\partial F_l}{\partial \alpha_l}\cdot \frac{d\alpha_l}{d\alpha_w} = -\frac{L}{K^2}\cdot E\cdot\frac{1}{(N(1-\alpha_w)-1)^2}.
\]

At $\alpha_w=0$,
\begin{gather*}\label{eq:flalpha}
F_{l\alpha_w}^0
=
-\frac{L}{K^2}\frac{E}{(N-1)^2}
=
O\!\left(\frac{1}{(N-1)^2}\right).
\end{gather*}
At $\alpha_w=0$ we have $\alpha_l=\frac{N-2}{N-1}\to 1$ as $N\to\infty$.
For fixed $(W,L)$, the  formulas for the $(W,L)$-partials of $F_l$ are
\begin{gather*}
    F_{lW}^0=E\left[\frac{-W+(1-2\alpha_l)L}{K^3}-\tau\frac{W+\alpha_l L}{K^2}\right], \\[0.25\baselineskip]
F_{lL}^0=E\left[\frac{(\alpha_l-2)W-\alpha_l L}{K^3}\right].
\end{gather*}

Letting $\alpha_l\to 1$ yields
\[
F_{lW}^0\to
E\left[\frac{-W-L}{K^3}-\tau\frac{W+L}{K^2}\right]
=
-E\left(\frac{1}{K^2}+\frac{\tau}{K}\right),
\qquad
F_{lL}^0\to
E\left[\frac{-W-L}{K^3}\right]
=
-\frac{E}{K^2}.
\]
Along $\alpha_w=0$, we may pass to the limit $(W,L)\to (W_\infty,L_\infty)$, so
\[
F_{lW}^0-F_{lL}^0 \longrightarrow -\frac{E\tau}{K}
\quad\text{as }N\to\infty.
\]

Define $B(W,L):=\frac{L}{K^2}+\tau\frac{L}{K}$. Then the winner's FOC at $\alpha_w=0$ is
\begin{gather*}
    F_w(W,L,0)=B(W,L)\cdot E-\rho
\end{gather*}
Hence, using the product rule and $E_W=-\tau E$, $E_L=0$,
\begin{gather}
F_{wW}^0
=\frac{\partial}{\partial W}\bigl(BE\bigr)
= B_W\,E + B\,E_W
=E\bigl(B_W-\tau B\bigr), \label{eq:FwW0}\\[4pt]
F_{wL}^0
=\frac{\partial}{\partial L}\bigl(BE\bigr)
= B_L\,E + B\,E_L
=E\,B_L. \label{eq:FwL0}
\end{gather}
Writing $B=L K^{-2}+\tau L K^{-1}$ and noting that $K_W,K_L=1$ we have
\begin{gather}
B_W= L\cdot(-2)K^{-3}\cdot K_W + \tau L\cdot(-1)K^{-2}\cdot K_W  -\frac{2L}{K^3}-\tau\frac{L}{K^2}, \label{eq:BW} \\[0.25\baselineskip]
B_L=\Bigl(K^{-2}+L\cdot(-2)K^{-3}\cdot K_L\Bigr)
+\tau\Bigl(K^{-1}+L\cdot(-1)K^{-2}\cdot K_L\Bigr) = \frac{W-L}{K^3}+\tau\frac{W}{K^2}. \label{eq:BL}
\end{gather}
Substituting \eqref{eq:BW}--\eqref{eq:BL} into \eqref{eq:FwW0}--\eqref{eq:FwL0} gives the explicit Jacobian entries:
\begin{gather*}
F_{wW}^0
=E\left(-\frac{2L}{K^3}-\tau\frac{L}{K^2}-\tau\left(\frac{L}{K^2}+\tau\frac{L}{K}\right)\right)
=E\left(-\frac{2L}{K^3}-2\tau\frac{L}{K^2}-\tau^2\frac{L}{K}\right), \label{eq:FwW0-exp} \\[0.25\baselineskip]
F_{wL}^0
=E\left(\frac{W-L}{K^3}+\tau\frac{W}{K^2}\right). \label{eq:FwL0-exp}
\end{gather*}
Since $F_{l\alpha_w}^0<0$ and
\begin{gather*}
    F_{wL}^0-F_{wW}^0 = \left(\frac{1}{K^2}+\tau\frac{W+2L}{K^2}+\tau^2\frac{L}{K}\right)E>0,
\end{gather*}
it follows that $(F_{wL}^0-F_{wW}^0)F_{l\alpha_w}^0<0$.
Since equilibrium is unique when $\alpha_w=0$ for any $N$, $\det J^0$ stays bounded away from $0$ for large $N$:
\begin{equation}\label{eq:det-bounded}
\exists\,c>0,\ \exists\,N_0\ \text{such that}\quad |\det J^0|\ge c \quad\forall\,N\ge N_0.
\end{equation}


Next, $F_{wW}^0$ and $F_{wL}^0$ are given explicitly by \eqref{eq:FwW0-exp}--\eqref{eq:FwL0-exp}.
In particular, for fixed $(W,L)$ these are finite, so there exists $M<\infty$ such that for all large $N$,
\begin{equation}\label{eq:bounded-Fw}
|F_{wL}^0-F_{wW}^0|\le M.
\end{equation}
Combining the expressions derived , \eqref{eq:det-bounded}, and \eqref{eq:bounded-Fw} yields the explicit estimate
\[
\left|
\frac{(F_{wL}^0-F_{wW}^0)\,F_{l\alpha_w}^0}{\det J^0}
\right|
\le
\frac{M}{c}\left|F_{l\alpha_w}^0\right|
=
\frac{M}{c}\cdot \frac{L}{K^2}\cdot \frac{E}{(N-1)^2}.
\]
Because $L/K^2$ and $E$ are bounded along the convergent equilibrium sequence, the right-hand side tends to zero as $N\to\infty$. Therefore,
\[
\frac{(F_{wL}^0-F_{wW}^0)\,F_{l\alpha_w}^0}{\det J^0}\to 0
\quad\text{as }N\to\infty.
\]
and thus,
\begin{gather}\label{eq:reduced}
\lim_{N\to\infty}\bigl(W'(0)+L'(0)\bigr)
=
\lim_{N\to\infty}
\frac{(F_{lW}^0-F_{lL}^0)\,F_{w\alpha_w}^0}{\det J^0}.
\end{gather}

It remains to find the sign of $\det J^0$ and $F_{w\alpha_w}^0$. Let
\[
\det J^0 = E^2\,\widetilde D_N,
\]
where $\widetilde D_N$ is the determinant of the $E$-free matrix. Let $N\to\infty$, equivalently $\alpha_l\to 1$, and define $\widetilde D_\infty:=\lim_{N\to\infty}\widetilde D_N$. A direct algebraic simplification yields
\begin{gather*}\label{eq:Dinf}
\widetilde D_\infty
=
\frac{1}{K^4}
+\frac{\tau(L+2W)}{K^4}
+\frac{\tau^2}{K^2}>0.
\end{gather*}
Consequently, for all sufficiently large $N$,
\[
\widetilde D_N>0
\implies 
\det J^0 = E^2\widetilde D_N>0.
\]
Let $N\to\infty$ so $\alpha_l\to 1$. Then the loser's FOC at $\alpha_w=0$ is
\[
\frac{W+L}{K^2}E=\rho
\implies
E=\rho K.
\]
Combining with the winner's first-order condition gives $B\rho K=K$, i.e.\ $BK=1$. Since $B=\frac{L}{K^2}+\frac{\tau L}{K}$, this implies
\begin{gather*}
    \frac{L}{K}=\frac{1}{1+\tau K}, \\[0.25\baselineskip]
W=K-L=\frac{\tau K^2}{1+\tau K}.
\end{gather*}
By \eqref{eq:fwalpha}, $F_{w\alpha_w}^0=A-\rho$, so dividing by $E>0$ and using $\rho/E=1/K$ (because $E=\rho K$),
\[
\frac{F_{w\alpha_w}^0}{E}
=\frac{A}{E}-\frac{1}{K}.
\]
We also have
\[
A=\frac{1}{W}\cdot\frac{W+L(1-E)}{K}
=\frac{1}{W}\left(1-\frac{L}{K}E\right)
\implies 
\frac{A}{E}=\frac{1}{W}\left(\frac{1}{E}-\frac{L}{K}\right).
\]
Using $E=\rho K$, $\frac{L}{K}=\frac{1}{1+\tau K}$, and $\frac{1}{W}=\frac{1+\tau K}{\tau K^2}$ yields
\[
\frac{F_{w\alpha_w}^0}{E}
=
\frac{1+\tau K}{\tau K^2}\left(\frac{1}{\rho K}-1\right).
\]
Because $W>0$ and $T>0$, we have $E=e^{-\tau W}\in(0,1)$, hence $\rho K=E\in(0,1)$ and thus $\frac{1}{\rho K}-1>0$.
All remaining factors are positive, so as $N\to\infty$
\begin{equation*}\label{eq:signFwalpha}
\frac{F_{w\alpha_w}^0}{E}>0
\implies 
F_{w\alpha_w}^0>0
\end{equation*}
Combining everything meas that for large $N$, the quotient in \eqref{eq:reduced} converges to a strictly negative limit, and hence $\lim_{N\to\infty}\bigl(W'(0)+L'(0)\bigr)<0$.

\paragraph{More Sequencer Revenue.} 

Using our reparameterizations, we can write a loser's equilibrium continuation payoff, which we now denote by $U$, as
\[
U
:=V\Bigl(\frac{L}{K}\Bigr)(1-\alpha_l)E-C\,L(1-\alpha_l)
=V(1-\alpha_l)\Bigl(\frac{L}{K}E-\rho L\Bigr),
\]
and assume the equilibrium path in the $(\alpha_w,N)$-space, given by $(W(\alpha_w;N),L(\alpha_w;N))$ satisfies
\begin{gather*}
    F_w(W,L,\alpha_w)=0, \\[0.25\baselineskip]
    F_l(W,L,\alpha_w)=0.
\end{gather*}
Let $H:=\frac{L}{K}E-\rho L$. Then
\begin{gather*}
    U=V(1-\alpha_l)H \\[0.25\baselineskip]
    U'
    =
    V\bigl[(1-\alpha_l)H'-\alpha_l'H\bigr].
\end{gather*}
Since
\begin{gather*}
    \left(\frac{L}{K}\right)'=\frac{WL'-LW'}{K^2},
\qquad
E'=-T W' E,
\end{gather*}
we obtain
\begin{gather*}
    H'
=
E\left(\frac{WL'-LW'}{K^2}-\tau\frac{L}{K}W'\right)-\rho L'.
\end{gather*}
Substituting, we then have
\[
U'
=
V\left\{
(1-\alpha_l)\left[
E\left(\frac{WL'-LW'}{K^2}-\tau\frac{L}{K}W'\right)-\rho L'
\right]
-\alpha_l'\left(\frac{L}{K}E-\rho L\right)
\right\}.
\]
In equilibrium, $F_l=0$ implies
\[
\frac{W+\alpha_l L}{K^2}E=R
\implies
E=\frac{\rho K^2}{W+\alpha_l L}.
\]
Thus, we have
\begin{gather*}
    \frac{L}{K}E-\rho L
=
\rho L\left(\frac{K-(W+\alpha_l L)}{W+\alpha_l L}\right)
=
\rho L\,\frac{(1-\alpha_l)L}{W+\alpha_l L}.
\end{gather*}
We can write the loser-type's equilibrium utility as
\[
U
=
V(1-\alpha_l)\,\rho L\,\frac{(1-\alpha_l)L}{W+\alpha_l L}
=
C\,\frac{(1-\alpha_l)^2L^2}{W+\alpha_l L}.
\]
Let $\varepsilon:=1-\alpha_l$ and $D:=W+\alpha_l L$. Then
\[
U=C\,\frac{\varepsilon^2L^2}{D}.
\]
Differentiating with respect to $\alpha_w$ gives
\begin{gather*}
    U'
=
C\left(
\frac{2\varepsilon\varepsilon' L^2+2\varepsilon^2 L L'}{D}
-\frac{\varepsilon^2L^2 D'}{D^2}
\right), \\[0.25\baselineskip]
D'=W'+\alpha_l' L+\alpha_l L'.
\end{gather*}
At $\alpha_w=0$,
\begin{gather*}
\varepsilon_N:=1-\alpha_l(0)=\frac{1}{N-1}
\implies
\varepsilon_N'=\varepsilon_N^2, \
\alpha_l'(0)=-\varepsilon_N^2.
\end{gather*}
Substituting into $U'$ yields
\[
U'(0)
=
C\,\varepsilon_N^2
\left(
\frac{2\varepsilon_N L^2+2L \cdot L'(0)}{D_0}
-\frac{L^2\bigl(W'(0)+(1-\varepsilon_N)L'(0)\bigr)}{D_0^2}
\right)
+O(\varepsilon_N^4),
\]
where $D_0=W+(1-\varepsilon_N)L\to S$ as $N\to\infty$. Dividing by $C\varepsilon_N^2$ and letting $N\to\infty$ gives
\[
\lim_{N\to\infty}\frac{U'(0)}{C\varepsilon_N^2}
=
\frac{2L_\infty L_\infty'}{S_\infty}
-\frac{L_\infty^2(W_\infty'+L_\infty')}{S_\infty^2}.
\tag{7}
\]
As $N\to\infty$, $\alpha_l\to1$ and $F_{l\alpha_w}(0)\to0$, so by the Implicit Function Theorem,
\[
\begin{pmatrix}
F_{wW}^\infty & F_{wL}^\infty\\
F_{lW}^\infty & F_{lL}^\infty
\end{pmatrix}
\binom{W_\infty'}{L_\infty'}
=
-
\binom{F_{w\alpha_w}^\infty}{0}.
\]
The second row implies
\[
F_{lW}^\infty W_\infty'+F_{lL}^\infty L_\infty'=0.
\]
At $\alpha_l=1$,
\begin{gather*}
    F_{lW}^\infty=-E\left(\frac{1}{K^2}+\frac{T}{K}\right), \\[0.25\baselineskip]
F_{lL}^\infty=-\frac{E}{K^2},
\end{gather*}
so $L_\infty'=-(1+\tau K_\infty)W_\infty'$. From the limiting $\alpha_w=0$ equilibrium we also have
\[
\frac{L_\infty}{S_\infty}=\frac{1}{1+\tau K_\infty}.
\]
Moreover, since $\lim_{N\to\infty}(W'(0)+L'(0))<0$, it follows that $W_\infty'>0$. Substituting yields
\[
\frac{2L_\infty L_\infty'}{K_\infty}
-\frac{L_\infty^2(W_\infty'+L_\infty')}{K_\infty^2}
=
W_\infty'\left[
-2(1+x)\frac{1}{1+x}
+x\frac{1}{(1+x)^2}
\right],
\]
where we denote $x:=\tau K_\infty>0$. Simplifying,
\[
= W_\infty'\left(-2+\frac{x}{(1+x)^2}\right)
=
-\,W_\infty'\frac{2+3x+2x^2}{(1+x)^2}
<0,
\]
since $W_\infty'>0$ and $x>0$. Therefore, $\lim_{N\to\infty}\frac{U'(0)}{C\varepsilon_N^2}<0$ so it follows that $\lim_{N\to\infty}U'(0)=0^{-}$. In particular, $U'(0)<0$ for all sufficiently large $N$.

\section{Additional Proofs}

\subsection{Multiple Winners: Concavity of Winner's Payoff}

Write the winner's payoff as
\begin{gather*}
    u_w^{(i)}(x;a,L)
    =
    V\cdot\frac{x}{K_w}\left(1-\frac{L}{K}\exp(-\tau K_w)\right)-Cx.
\end{gather*}
Since $(-Cx)''=0$ and $V>0$, it suffices to prove concavity of
\[
s(x):=\frac{x}{K_w}\left(1-\frac{K_l}{K}e^{-\tau K_w}\right)
\]
where $K_w=K_{-w}+x$ and $K=K_l+K_{-w}+x$. A direct differentiation yields the identity
\begin{equation}
s''(x)
=
-\frac{e^{-\tau K_w}}{K_w^3 K^3}\,P(x;K_l,K_{-w},\tau),
\end{equation}
where $P(x;K_{-w},K_l,\tau)=e^{\tau K_w}A(x;K_{-w},K_l)+B(x;K_{-w},K_l,\tau)$
with
\begin{align*}
A(x;K_{-w},K_l)
&\coloneqq 2K_l^3K_{-w}+6K_l^2K_{-w}^2+6K_l^2K_{-w}x+6K_lK_{-w}^3+12K_lK_{-w}^2x+6K_lK_{-w}x^2 \\
&\quad+2K_{-w}^4+6K_{-w}^3x+6K_{-w}^2x^2+2K_{-w}x^3, \\
B(x;K_{-w},K_l,\tau)
&\coloneqq K_l^3K_{-w}^2\tau^2x-2K_l^3K_{-w}^2\tau+2K_l^3K_{-w}\tau^2x^2-2K_l^3K_{-w}\tau x-2K_l^3K_{-w}+K_l^3\tau^2x^3 \\
&\quad+2K_l^2K_{-w}^3\tau^2x-4K_l^2K_{-w}^3\tau+6K_l^2K_{-w}^2\tau^2x^2-6K_l^2K_{-w}^2\tau x-6K_l^2K_{-w}^2 \\
&\quad+6K_l^2K_{-w}\tau^2x^3-6K_l^2K_{-w}x+2K_l^2\tau^2x^4+2K_l^2\tau x^3+K_lK_{-w}^4\tau^2x \\
&\quad-2K_lK_{-w}^4\tau+4K_lK_{-w}^3\tau^2x^2-4K_lK_{-w}^3\tau x-4K_lK_{-w}^3 \\
&\quad+6K_lK_{-w}^2\tau^2x^3-6K_lK_{-w}^2x+4K_lK_{-w}\tau^2x^4+4K_lK_{-w}\tau x^3 \\
&\quad+K_l\tau^2x^5+2K_l\tau x^4+2K_lx^3. 
\end{align*}
Thus $s''(x)\le 0$ if $P(x;K_{-w},K_l,\tau)\ge 0$. Each term in has a nonnegative coefficient, so
\begin{equation*}
A(x;K_{-w},K_l)\ge 0.
\end{equation*}
Moreover, for all $t\ge 0$, $e^t\ge 1+t$ (e.g.\ by convexity of $e^t$ or by the monotonicity of $e^t-1-t$). Applying this with $t=\tau K_w\ge 0$ gives
\begin{equation*}
e^{\tau K_w}\ge 1+\tau K_w.
\end{equation*}
so it follows that
\begin{equation*}\
P(x;K_{-w},K_l,\tau)\ge (1+\tau K_w)A(x;K_{-w},K_l)+B(x;K_{-w},K_l,\tau).
\end{equation*}
Define the right-hand side of \eqref{eq:P-lb} as
\[
R(x;K_{-w},K_l,\tau):=(1+\tau K_w)\cdot A(x;K_{-w},K_l)+B(x;K_{-w},K_l,\tau).
\]
We can write $R(x;K_{-w},K_l,\tau)$ as 
\begin{equation*}
R(x;K_{-w},K_l,\tau)=(K_{-w}+x)^2\,S(x;K_{-w},K_l,\tau),
\end{equation*}
where
\begin{align*}
S(x;K_{-w},K_l,\tau)
&:=K_l^3\tau^2x
+2K_l^2K_{-w}\tau^2x+2K_l^2K_{-w}\tau
+2K_l^2\tau^2x^2+2K_l^2\tau x \\
&\quad+K_lK_{-w}^2\tau^2x+4K_lK_{-w}^2\tau
+2K_lK_{-w}\tau^2x^2+6K_lK_{-w}\tau x+2K_lK_{-w} \\
&\quad+K_l\tau^2x^3+2K_l\tau x^2+2K_lx
+2K_{-w}^3\tau+4K_{-w}^2\tau x+2K_{-w}^2 \\
&\quad+2K_{-w}\tau x^2+2K_{-w}x.
\end{align*}
Every term in $S$ has a nonnegative coefficient, so $S(x;K_{-w},K_l,\tau)$ and thus $R(x;K_{-w},K_l,\tau)\geq0$ as well as $P(x;K_{-w},K_l,\tau)$ are nonnegative. Moreover, if $(K_{-w},K_l)\neq(0,0)$ and $K_w=K_{-w}+x>0$, then the inequality is strict, so $u_w^{(i)}$ is concave in $k_w^{(i)}$.

\subsection{Global Monotonicity of $K( T)$ when $n_w=1$}

Fix $ T\ge 0$. Define $E( T)=\exp\!\bigl(-\lambda T\,k_w\bigr)$ and $K( T)=k_w( T)+n_l\cdot k_l( T)$. The symmetric equilibrium FOCs are then
\begin{gather*}
\rho
=
\left(\frac{n_l \cdot k_l( T)}{K( T)^2}+\lambda T\,\frac{n_l k_l( T)}{K( T)}\right)E( T), \\[0.25\baselineskip]
\rho
=
\frac{k_w( T)+(n_l-1)\cdot k_l( T)}{K( T)^2}\,E( T).
\end{gather*}
Suppressing dependence on $ T$, equating the right-hand sides and cancelling $E( T)>0$ yields
\[
\frac{n_l k_l}{K^2}+\lambda T\,\frac{n_l k_l}{K}
=\frac{k_w+(n_l-1)k_l}{K^2} \implies K=(n_l+1)k_l+\lambda T\,n_l k_l K.
\]
Defining $\theta=\lambda T K$ and $D(\theta)=(1+\theta)n_l+1$, we obtain 
\begin{gather*}
k_l=\frac{K}{D(\theta)}, \\[0.25\baselineskip]
k_w=\frac{(1+n_l \theta)K}{D(\theta)}.
\end{gather*}
Substituting the above expression for $k_w$ into the loser's FOC, we have
\[
K=k_w+(n_l-1)k_l
=\frac{(1+n_l \theta)K}{D(\theta)}+\frac{(n_l-1)K}{D(\theta)}
=\frac{n_l(1+\theta)K}{D(\theta)}.
\]
so we obtain a single equation in $K$ and $ T$:
\begin{gather*}\label{eq:G_eq}
H(K, T)\coloneqq
\frac{n_l(1+\theta)}{K\cdot D(\theta)}\,
\exp\!\left(-\lambda T\,\frac{(1+n_l \theta)K}{D(\theta)}\right)=\rho,
\end{gather*}
Define $G(K, T)\coloneqq H(K, T)-\rho$. Along the equilibrium path $G(K( T), T)=0$,
the implicit-function theorem yields
\begin{gather*}
K'( T)=-\frac{G_ T}{G_K}=-\frac{H_ T}{H_K}.
\end{gather*}
Thus it suffices to show $H_ T<0$ and $H_K<0$ for all $ T>0$.

We first show that $H_ T<0$. Note that
\[
\log H(K, T)
=
\log n_l+\log(1+\theta)-\log K-\log D(\theta)-\lambda T k_w.
\]
Holding $K$ fixed, we have $\partial_ T \theta=\lambda K$ and $\partial_ T D(\theta)=n_l\partial_ T \theta=\lambda n_l K$.
Moreover,
\[
\frac{\partial k_w}{\partial \theta}
=\frac{\partial}{\partial \theta}\left(\frac{(1+n_l \theta)K}{D(\theta)}\right)
=
K\cdot \frac{n_l D(\theta)-(1+n_l \theta)n_l}{D(\theta)^2}
=
K\cdot \frac{n_l^2}{D(\theta)^2},
\]
so we have that
\[
\frac{\partial k_w}{\partial T}
=
\frac{\partial k_w}{\partial \theta}\cdot\frac{\partial \theta}{\partial T}
=
\lambda n_l^2 \frac{K^2}{D(\theta)^2}.
\]
Therefore,
\begin{align*}
\frac{\partial}{\partial T}\log H(K, T)
&=
\frac{\lambda K}{1+\theta}
-\frac{\lambda n_l K}{D(\theta)}
-\lambda k_w
-\lambda T\,\frac{\partial k_w}{\partial T}\\
&=
\frac{\lambda K}{1+\theta}
-\frac{\lambda n_l K}{D(\theta)}
-\lambda\frac{(1+n_l \theta)K}{D(\theta)}
-\lambda T\left(\lambda n_l^2\frac{K^2}{D(\theta)^2}\right).
\end{align*}
Dividing by $\lambda K>0$, we have
\begin{gather*}
\frac{1}{\lambda K}\frac{\partial}{\partial T}\log H(K, T)
=
\frac{1}{1+\theta}
-\frac{n_l}{D(\theta)}
-\frac{1+n_l \theta}{D(\theta)}
-\frac{\theta  n_l^2}{D(\theta)^2}
=
-\frac{\theta}{1+\theta}-\frac{\theta n_l^2}{D(\theta)^2}<0
\end{gather*}
Hence $\partial_ T\log H(K, T)<0$, so $H_ T(K, T)<0$ for all $ T,K>0$.

We then show that $H_K<0$, holding $ T$ fixed. Note that $\partial_K \theta=\lambda T = \theta/K$ and $\partial_K D(\theta)=n_l\partial_K \theta=n_l \theta/K$. Then
\begin{align*}
\frac{\partial}{\partial K}\log H(K, T)
&=
\frac{1}{1+\theta}\frac{\partial \theta}{\partial K}
-\frac{1}{K}
-\frac{1}{D(\theta)}\frac{\partial D}{\partial K}
-\lambda T\,\frac{\partial k_w}{\partial K}\\
&=
\frac{a/K}{1+\theta}
-\frac{1}{K}
-\frac{n_l \theta/K}{D(\theta)}
-\frac{a}{K}\left(\frac{1+n_l \theta}{D(\theta)}+\frac{a n_l^2}{D(\theta)^2}\right).
\end{align*}
Multiplying by $K>0$, we have
\begin{align*}
K\frac{\partial}{\partial K}\log H(K, T)
=
\frac{\theta}{1+\theta}-1
-\frac{n_l \theta}{D(\theta)}
-\frac{\theta(1+n_l \theta)}{D(\theta)}
-\frac{\theta^2 n_l^2}{D(\theta)^2}
=
-\frac{1}{1+\theta}-\theta-\frac{\theta^2 n_l^2}{D(\theta)^2}<0
\end{align*}
so $H_K(K, T)<0$.

\subsection{Multiple Winners: Winner Copies Greater Than Loser Copies}

Write $W=n_wk_w^*$, $L=n_lk_l^*$, and $E=e^{-\tau W}$. Dividing both equations by $V$ and rewriting them in terms of $(W,L)$ yields
\begin{gather*}
\frac CV
=
\frac{n_w-1}{n_w W}
\left(
1-\frac{L}{W+L}E
\right)
+\frac{1}{n_w}
\left(
\frac{L}{(W+L)^2}
+
\tau\,\frac{L}{W+L}
\right)E,
\\[0.25\baselineskip]
\frac CV
=
\frac{W+\frac{n_l-1}{n_l}L}{(W+L)^2}\,E.
\end{gather*}
Subtracting the loser's FOC from the winner's FOC and multiplying by $n_l n_w W (W+L)^2>0$
yields
\begin{equation}
n_l(n_w-1)(W+L)^2 + E\cdot\Xi(W,L)=0,
\end{equation}
where $\Xi(W,L)
=
- n_l n_w (W+L)^2
+ \tau n_l W L (W+L)
+ 2 n_l W L
+ n_l L^2
+ n_w W L$.

\noindent\textbf{Step 3. Use of $E<1$.}
For $n_l>0$, $n_w>1$, and $W+L>0$, the first term in above is strictly positive, so $\Xi(W,L)<0$ and
\begin{gather*}
E=\frac{n_l(n_w-1)(W+L)^2}{-\Xi(W,L)}.
\end{gather*}
Since $\tau>0$ and $W>0$, we have $E=e^{-\tau W}\in(0,1)$, and therefore $E<1$ implies
\[
n_l(n_w-1)(W+L)^2<-\Xi(W,L).
\]
Expanding $\Xi$ and collecting terms yields
\[
0<
W\Bigl(
n_l W
-
n_w L
-
\tau n_l L (W+L)
\Bigr).
\]
As $W>0$, this implies $n_l W>n_w L+\tau n_l L (W+L)>n_w L$. Diving by $n_ln_w$ yields $k_l^*<k_w^*$.

\subsection{Multiple Winners: Winner Utility Greater Than Loser Utility}

Let $K=W+L$ and $E=\exp^{-\tau W}$. Using the equilibrium condition for $\rho$, we have
\begin{align*}
    u_w^*-u_l^* &= (1-\alpha_w)-[(1-\alpha_l)-(1-\alpha_w)]\frac{L}{K}E - [(1-\alpha_w)W+(1-\alpha_l)L]\frac{W+\alpha_lL}{K^2}E \\[0.25\baselineskip]
    &= \frac{1}{n_w}-\left(\frac{1}{n_w}+\frac{1}{n_l}\right)\frac{n_lk_l}{K}E-(k_w-k_l)\cdot\frac{K-k_l}{K^2}E
\end{align*}
Multiplying by $n_w$, it suffices to show that 
\begin{align*}
    1 > \left[\left(1+\frac{n_w}{n_l}\right)\frac{n_lk_l}{n_lk_l+n_wk_w}-n_w(k_w-k_l)\frac{K-k_l}{K^2}\right]E.
\end{align*}
Since $E<1$, it remains to show that the bracketed expression is less than one as well. Note that
\begin{align*}
    \left(1+\frac{n_w}{n_l}\right)\frac{n_lk_l}{n_lk_l+n_wk_w}-n_w(k_w-k_l)\frac{K-k_l}{K^2} = \frac{1}{K^2}[(n_lk_l-n_wk_w+2n_wk_l)K+n_wk_l(k_w-k_l)]
\end{align*}
We claim that $(n_lk_l-n_wk_w+2n_wk_l)K+n_wk_l(k_w-k_l)<K^2$. Dividing by $K$ yields
\begin{gather*}
    n_lk_l+2n_wk_l-n_wk_w+n_wk_l(k_w-k_l) < n_lk_l+n_wk_w
\end{gather*}
Collecting terms and simplifying yields
\begin{gather*}
    \frac{k_l}{K}(k_w-k_l) < 2(k_w-k_l)
\end{gather*}
which holds by construction.

\section{Robustness Checks}


\subsection{Prior Treatment Date as Placebo}

For a properly identified setup, treatment effects should attenuate or disappear when designating a date before the actual treatment date. We move the ``treatment date'' 15 days prior to April 2, 2025.

\begin{table}[H] \centering \footnotesize
\begin{tabular}{@{\extracolsep{2pt}}lcccc}
\\[-2.3ex]\hline
\hline \\[-2.3ex]
\\[-2.3ex] & (1) & (2) & (3) & (4) \\
\\[-2.3ex] & log\,\textsf{RepTXs} & log\,\textsf{RepTXs} & log\,\textsf{Revenue} & log\,\textsf{Revenue} \\
\hline \\[-2.3ex]
\textsf{Post$\times$Treated} & -0.318$^{}$ & -0.347$^{}$ & 0.213$^{}$ & 0.158$^{}$ \\
& (0.314) & (0.294) & (0.144) & (0.131) \\
\textsf{BaseGasL2} & & 0.097$^{}$ & & 0.153$^{**}$ \\
& & (0.071) & & (0.069) \\
\textsf{VolumeDEX} & & 0.167$^{}$ & & 0.069$^{}$ \\
& & (0.121) & & (0.160) \\
\hline \\[-2.3ex]
 Observations & 605 & 605 & 605 & 605 \\
 N. of groups & 5 & 5 & 5 & 5 \\
 $R^2$ & 0.005 & 0.023 & 0.003 & 0.033 \\
\hline
\hline \\[-2.3ex]
\multicolumn{5}{l}{Includes chain and day fixed effects. Standard errors are clustered by chain.} \\
\textit{Note:} & \multicolumn{4}{r}{$^{*}$p$<$0.1; $^{**}$p$<$0.05; $^{***}$p$<$0.01} \\
\end{tabular}
\caption{This table reports two-way fixed-effects estimates using a placebo treatment date of April 2, 2025—15 days prior to the actual Timeboost implementation on April 17, 2025. The sample period spans February 17 to June 17, 2025. The dependent variables are the natural logarithm of repeated transactions (columns 1-2) and the natural logarithm of platform revenue in ETH (columns 3-4). Repeated transactions are defined as bursts of identical transactions (same sender, recipient, value, function selector, and calldata) within 2-second windows. Platform revenue includes gas fees from reverted transactions and, for Arbitrum after the placebo date, hypothetical auction payments. Post×Treated is an indicator equal to one for Arbitrum observations after April 2, 2025. BaseGasL2 is the Layer-2 base gas fee in ETH. VolumeDEX is the natural logarithm of DEX trading volume in USD. All variables except the treatment indicator are standardized (z-scored) prior to estimation. All specifications include chain and day fixed effects. Standard errors are clustered by chain. Under the parallel trends assumption, we should not observe significant treatment effects when using a pre-treatment placebo date.}
\end{table}

\newpage

\subsection{Narrower Study Window}

We narrow the study window by 15 days on both ends, so that the new sample period is from March 2 to June 2, 2025.

\begin{table}[H] \centering \footnotesize
\begin{tabular}{@{\extracolsep{2pt}}lcccc}
\\[-2.3ex]\hline
\hline \\[-2.3ex]
\\[-2.3ex]  & (1) & (2) & (3) & (4) \\
\\[-2.3ex] & log\,\textsf{RepTXs} & log\,\textsf{RepTXs} & log\,\textsf{Revenue} & log\,\textsf{Revenue} \\
\hline \\[-2.3ex]
\textsf{Post$\times$Treated} & -0.564$^{**}$ & -0.587$^{**}$ & 0.964$^{***}$ & 0.924$^{***}$ \\
& (0.242) & (0.238) & (0.044) & (0.052) \\
\textsf{BaseGasL2} & & 0.059$^{}$ & & 0.104$^{}$ \\
& & (0.050) & & (0.067) \\
\textsf{VolumeDEX} & & 0.058$^{}$ & & 0.132$^{}$ \\
& & (0.091) & & (0.149) \\
\hline \\[-2.3ex]
 Observations & 465 & 465 & 465 & 465 \\
 N. of groups & 5 & 5 & 5 & 5 \\
 $R^2$ & 0.016 & 0.020 & 0.060 & 0.081 \\
\hline
\hline \\[-2.3ex]
\multicolumn{5}{l}{Includes chain and day fixed effects. Standard errors are clustered by chain.} \\
\textit{Note:} & \multicolumn{4}{r}{$^{*}$p$<$0.1; $^{**}$p$<$0.05; $^{***}$p$<$0.01} \\
\end{tabular}
\caption{This table reports two-way fixed-effects estimates using a narrower sample window that excludes the first and last 15 days of the original sample period. The sample spans March 2 to June 2, 2025, with Timeboost implemented on April 17, 2025. The dependent variables are the natural logarithm of repeated transactions (columns 1-2) and the natural logarithm of platform revenue in ETH (columns 3-4). Repeated transactions are defined as bursts of identical transactions (same sender, recipient, value, function selector, and calldata) within 2-second windows. Platform revenue includes gas fees from reverted transactions and, for Arbitrum post-Timeboost, auction payments from express lane winners. Post×Treated is an indicator equal to one for Arbitrum observations after April 17, 2025. BaseGasL2 is the Layer-2 base gas fee in ETH. VolumeDEX is the natural logarithm of DEX trading volume in USD. All variables except the treatment indicator are standardized (z-scored) prior to estimation. All specifications include chain and day fixed effects. Standard errors are clustered by chain. This narrower window tests whether results are sensitive to the choice of pre- and post-treatment periods.}
\end{table}

\begin{table}[H] \centering \footnotesize
\begin{tabular}{@{\extracolsep{2pt}}lcccc}
\\[-2.3ex]\hline
\hline \\[-2.3ex]
\\[-2.3ex]  & (1) & (2) & (3) & (4) \\
\\[-2.3ex] & log\,\textsf{RepTXs} & log\,\textsf{RepTXs} & log\,\textsf{Revenue} & log\,\textsf{Revenue} \\
\hline \\[-2.3ex]
\textsf{Intercept} & -11.000$^{*}$ & -11.939$^{*}$ & -3.732$^{}$ & -8.377$^{}$ \\
& (5.681) & (6.153) & (6.958) & (7.223) \\
\textsf{Post} & -0.462$^{*}$ & -0.390$^{}$ & 1.307$^{***}$ & 1.155$^{***}$ \\
& (0.259) & (0.263) & (0.355) & (0.370) \\
\textsf{VolumeDEX} & 1.090$^{***}$ & 1.053$^{***}$ & 0.286$^{}$ & 0.563$^{}$ \\
& (0.283) & (0.315) & (0.345) & (0.366) \\
\textsf{BaseGasMain} & -0.200$^{}$ & -0.264$^{*}$ & 0.355$^{**}$ & 0.376$^{**}$ \\
& (0.149) & (0.145) & (0.169) & (0.175) \\
\textsf{BaseGasArb} & & 169.113$^{***}$ & & -19.931$^{}$ \\
& & (52.415) & & (34.822) \\
\textsf{VolatilityETH} & & -0.114$^{}$ & & -1.926$^{**}$ \\
& & (0.839) & & (0.775) \\
\hline \\[-2.3ex]
 Observations & 123 & 123 & 123 & 123 \\
 $R^2$ & 0.224 & 0.255 & 0.313 & 0.342 \\
 Adjusted $R^2$ & 0.205 & 0.224 & 0.296 & 0.314 \\
\hline
\hline \\[-2.3ex]
\multicolumn{5}{l}{Standard errors are computed using the Newey-West estimator.} \\
\textit{Note:} & \multicolumn{4}{r}{$^{*}$p$<$0.1; $^{**}$p$<$0.05; $^{***}$p$<$0.01} \\
\end{tabular}
\caption{This table reports estimates from a within-Arbitrum analysis using a wider 5-second window to identify repeated transactions. The sample consists of daily observations on Arbitrum from March 2 to June 2, 2025, with Timeboost implemented on April 17, 2025. The dependent variables are the natural logarithm of repeated transactions (columns 1-2) and the natural logarithm of platform revenue in ETH (columns 3-4). Repeated transactions are defined as bursts of identical transactions (same sender, recipient, value, function selector, and calldata) within 5-second windows, compared to the 2-second baseline. Platform revenue includes gas fees from reverted transactions and, post-Timeboost, auction payments from express lane winners. Post is an indicator equal to one for observations after April 17, 2025. VolumeDEX is the natural logarithm of DEX trading volume in USD on Arbitrum. BaseGasMain is the Ethereum mainnet base gas fee in ETH. BaseGasArb is the Arbitrum base gas fee in ETH. VolatilityETH is the daily return volatility of ETH/USD. Variables are not standardized in this specification. Standard errors are computed using the Newey-West estimator to account for heteroskedasticity and autocorrelation. This specification tests whether results are sensitive to the choice of time window for identifying repeated transactions.}
\end{table}

\subsection{Wider Interval for Repeated Transactions}

We widen the interval under which we search for repeated transactions from 2 seconds to 5 seconds.

\begin{table}[H] \centering \footnotesize
\begin{tabular}{@{\extracolsep{2pt}}lcccc}
\\[-2.3ex]\hline
\hline \\[-2.3ex]
\\[-2.3ex]  & (1) & (2) & (3) & (4) \\
\\[-2.3ex] & log\,\textsf{RepTXs} & log\,\textsf{RepTXs} & log\,\textsf{Revenue} & log\,\textsf{Revenue} \\
\hline \\[-2.3ex]
\textsf{Post$\times$Treated} & -0.671$^{***}$ & -0.673$^{***}$ & 0.772$^{***}$ & 0.724$^{***}$ \\
& (0.191) & (0.159) & (0.195) & (0.135) \\
\textsf{BaseGasL2} & & 0.145$^{***}$ & & 0.403$^{***}$ \\
& & (0.044) & & (0.115) \\
\textsf{VolumeDEX} & & 0.207$^{*}$ & & 0.139$^{}$ \\
& & (0.108) & & (0.128) \\
\hline \\[-2.3ex]
 Observations & 605 & 605 & 605 & 605 \\
 N. of groups & 5 & 5 & 5 & 5 \\
 $R^2$ & 0.022 & 0.056 & 0.033 & 0.204 \\
\hline
\hline \\[-2.3ex]
\multicolumn{5}{l}{Includes chain and day fixed effects. Standard errors are clustered by chain.} \\
\textit{Note:} & \multicolumn{4}{r}{$^{*}$p$<$0.1; $^{**}$p$<$0.05; $^{***}$p$<$0.01} \\
\end{tabular}
\caption{This table reports two-way fixed-effects estimates using a wider time interval to detect repeated identical transactions. The sample period spans February 17 to June 17, 2025, with Timeboost implemented on Arbitrum on April 17, 2025. Control Layer-2 blockchains include Optimism, Base, Polygon, and Avalanche. The dependent variables are the natural logarithm of repeated transactions (columns 1-2) and the natural logarithm of platform revenue in ETH (columns 3-4). Repeated transactions are defined as bursts of identical transactions (same sender, recipient, value, function selector, and calldata) within 2-second windows. Platform revenue includes gas fees from reverted transactions and, for Arbitrum post-Timeboost, auction payments from express lane winners. Post×Treated is an indicator equal to one for Arbitrum observations after April 17, 2025. BaseGasL2 is the Layer-2 base gas fee in ETH. VolumeDEX is the natural logarithm of DEX trading volume in USD. All variables except the treatment indicator are standardized (z-scored) prior to estimation. All specifications include chain and day fixed effects. This specification addresses concerns that cluster-robust standard errors may be biased when the number of clusters is small (five chains).}
\end{table}

\begin{table}[H] \centering \footnotesize
\begin{tabular}{@{\extracolsep{2pt}}lcccc}
\\[-2.3ex]\hline
\hline \\[-2.3ex]
\\[-2.3ex] & (1) & (2) & (3) & (4) \\
\\[-2.3ex] & log\,\textsf{RepTXs} & log\,\textsf{RepTXs} & log\,\textsf{Revenue} & log\,\textsf{Revenue} \\
\hline \\[-2.3ex]
\textsf{Intercept} & -9.453$^{}$ & -9.028$^{}$ & 0.566$^{}$ & -6.362$^{}$ \\
& (5.784) & (5.859) & (7.011) & (7.060) \\
\textsf{Post} & -0.542$^{**}$ & -0.416$^{*}$ & 0.783$^{**}$ & 0.527$^{}$ \\
& (0.228) & (0.228) & (0.344) & (0.350) \\
\textsf{VolumeDEX} & 1.015$^{***}$ & 0.894$^{***}$ & 0.084$^{}$ & 0.529$^{}$ \\
& (0.288) & (0.299) & (0.348) & (0.360) \\
\textsf{BaseGasMain} & -0.062$^{}$ & -0.159$^{}$ & 0.608$^{***}$ & 0.663$^{***}$ \\
& (0.152) & (0.149) & (0.182) & (0.191) \\
\textsf{BaseGasArb} & & 190.117$^{***}$ & & -86.322$^{*}$ \\
& & (52.041) & & (50.565) \\
\textsf{VolatilityETH} & & 0.243$^{}$ & & -3.182$^{***}$ \\
& & (0.723) & & (0.830) \\
\hline \\[-2.3ex]
Observations & 121 & 121 & 121 & 121 \\
$R^2$ & 0.262 & 0.315 & 0.229 & 0.313 \\
Adjusted $R^2$ & 0.243 & 0.285 & 0.209 & 0.283 \\
\hline
\hline \\[-2.3ex]
\multicolumn{5}{l}{Standard errors are computed using the Newey-West estimator.} \\
\textit{Note:} & \multicolumn{4}{r}{$^{*}$p$<$0.1; $^{**}$p$<$0.05; $^{***}$p$<$0.01} \\
\end{tabular}
\caption{This table reports two-way fixed-effects estimates of the impact of Timeboost adoption on Arbitrum relative to control Layer-2 blockchains. The sample period spans February 17 to June 17, 2025, with Timeboost implemented on April 17, 2025. The dependent variables are the natural logarithm of repeated transactions (columns 1-2) and the natural logarithm of platform revenue in ETH (columns 3-4). Repeated transactions are defined as bursts of identical transactions (same sender, recipient, value, function selector, and calldata) within 2-second windows. Platform revenue includes gas fees from reverted transactions and, for Arbitrum post-Timeboost, auction payments from express lane winners. Post×Treated is an indicator equal to one for Arbitrum observations after April 17, 2025. BaseGasL2 is the Layer-2 base gas fee in ETH. VolumeDEX is the natural logarithm of DEX trading volume in USD.}
\end{table}

\subsection{Heteroskedasticity and Autocorrelation Robust Standard Errors}

\begin{table}[H] \centering \footnotesize
\begin{tabular}{@{\extracolsep{2pt}}lcccc}
\\[-2.3ex]\hline
\hline \\[-2.3ex]
\\[-2.3ex] & (1) & (2) & (3) & (4) \\
\\[-2.3ex] & log\,\textsf{RepTXs} & log\,\textsf{RepTXs} & log\,\textsf{Revenue} & log\,\textsf{Revenue} \\
\hline \\[-2.3ex]

\textsf{Post$\times$Treated} 
& -0.699$^{***}$ & -0.698$^{***}$ & 0.760$^{***}$ & 0.745$^{***}$ \\
& (0.172) & (0.170) & (0.174) & (0.179) \\

\textsf{BaseGasL2} 
&  & 0.096$^{**}$ &  & 0.142$^{***}$ \\
&  & (0.041) &  & (0.043) \\

\textsf{VolumeDEX} 
&  & 0.159$^{**}$ &  & 0.074 \\
&  & (0.074) &  & (0.061) \\
\hline \\[-2.3ex]
Observations & 605 & 605 & 605 & 605 \\
N. of groups & 5 & 5 & 5 & 5 \\
$R^2$ & 0.025 & 0.041 & 0.037 & 0.063 \\
\hline
\hline \\[-2.3ex]
\multicolumn{5}{l}{Includes chain and day fixed effects. Standard errors are robust (HC3).}  \\
\textit{Note:} & \multicolumn{4}{r}{$^{*}$p$<$0.1; $^{**}$p$<$0.05; $^{***}$p$<$0.01} \\
\end{tabular}
\caption{This table reports two-way fixed-effects estimates using heteroskedasticity-robust (HC3) standard errors instead of cluster-robust standard errors. The treatment group is Arbitrum after April 17, 2025, and the control group consists of other Layer-2 blockchains. The dependent variables are the natural logarithm of repeated transactions (columns 1–2) and the natural logarithm of platform revenue in ETH (columns 3–4). Repeated transactions are defined as bursts of identical transactions within 2-second windows. Platform revenue includes gas fees from reverted transactions and auction payments from express lane winners. BaseGasL2 is the Layer-2 base gas fee in ETH and VolumeDEX is the natural logarithm of DEX trading volume in USD. All specifications include chain and day fixed effects and use HC3 robust standard errors.}
\label{tab:did-result}
\end{table}

\end{document}